\documentclass[final,3p,12pt,authoryear,a4paper]{elsarticle}

\usepackage{float}
\usepackage{setspace}
\usepackage[none]{hyphenat}
\usepackage{amsmath,amsfonts,amssymb,amsthm,pxfonts,dsfont}
\usepackage{natbib}
\usepackage[bookmarks, colorlinks]{hyperref}
\hypersetup{linkcolor=blue,citecolor=blue,filecolor=black,urlcolor=blue} 
\usepackage{graphicx} 
\usepackage{color}
\usepackage{geometry}
\usepackage{multirow}
\usepackage{booktabs}
\usepackage{rotating} 
\usepackage{subfig}
\usepackage[american]{babel}
\usepackage{longtable,lscape,ltcaption}
\usepackage{rotating}
\usepackage{soul}
\usepackage[normalem]{ulem}
\usepackage{pdfpages}

\setstretch{1.5}

\tolerance=1
\emergencystretch=\maxdimen
\hyphenpenalty=10000
\hbadness=10000



\DeclareMathOperator{\diag}{diag}

\begin{document}

\biboptions{longnamesfirst}
\newtheorem{corollary}{Corollary}

\journal{}

\begin{frontmatter}

\title{Time-Varying Identification\\ of Structural Vector Autoregressions}

\author[eur]{Annika Camehl\corref{cor1}}
\author[um]{Tomasz Wo\'zniak}

\cortext[cor1]{Corresponding author. \emph{Email address:} \href{mailto:camehl@ese.eur.nl}{camehl@ese.eur.nl}.}

\address[eur]{Erasmus University Rotterdam, Burgemeester Oudlaan 50, 3062 PA Rotterdam, The Netherlands} 
\address[um]{University of Melbourne, 111 Barry St., 3053 Carlton, VIC, Australia}

\begin{abstract}
\noindent We propose a novel Bayesian heteroskedastic Markov-switching structural vector autoregression with data-driven time-varying identification. The model selects among alternative patterns of exclusion restrictions to identify structural shocks within the Markov process regimes. We implement the selection through a~multinomial prior distribution over these patterns, which is a~spike'n'slab prior for individual parameters. By combining a Markov-switching structural matrix with heteroskedastic structural shocks following a stochastic volatility process, the model enables shock identification through time-varying volatility within a regime. 
As a~result, the exclusion restrictions become over-identifying, and their selection is driven by the signal from the data. Our empirical application shows that data support time variation in the US monetary policy shock identification. We also verify that time-varying volatility identifies the monetary policy shock within the regimes.
\end{abstract}

\begin{keyword}\normalsize
Identification through Heteroskedasticity \sep Markov Switching \sep Time-varying Parameters \sep Verification of Identification

\smallskip\textit{JEL classification:}
C11, 
C12, 
C32 
\end{keyword}

\end{frontmatter}

\newpage

\section{Introduction}\label{sec:Introduction}

\noindent Most researchers assume that restrictions identifying shocks in structural vector autoregressions remain constant over time, as proposed by \cite{sims1980}. This time-invariance in identification is imposed in the literature regardless of the identification strategy or time variation in the parameter values \citep[see, e.g.,][]{Primiceri2005,Sims2006}. However, time-invariant identification strategies might be at odds with the data. We relax this time-invariance assumption and introduce a novel approach  that implements time-varying identification (TVI) through a data-driven search mechanism.

We propose a new model, a Bayesian Markov-switching (MS) structural vector autoregression (SVAR) with time-varying identification and heteroskedasticity following stochastic volatility (SV). This framework allows structural shock identification to adapt to the data over time. Our paper has three contributions. As the first contribution, we introduce time-varying identification through a data-driven search. This allows our model to select alternative exclusion restrictions across regimes. Two key features of our model implement this selection: first, we introduce a new prior distribution to estimate a TVI indicator which alters the identification pattern in each MS regime. Second, we let the structural matrix be regime-dependent as in \cite{Sims2006}. Together, these features let both the structural parameters as well as different exclusion restrictions vary over regimes. We do not rule out time-invariant identification as no restriction pattern is associated with a specific regime. Instead, we rely on the data to select which exclusion restrictions are supported within a regime. We implement the~data-driven search process for regime-specific identifying exclusion restrictions for a specific shock, hence, providing partial identification. 

As the second contribution, we introduce identification through heteroskedasticity \textit{within} regimes. The data-driven search of regime-dependent exclusion restriction patterns is only feasible if these restrictions are overidentifying within the regime. We achieve this by partially identifying a particular structural shock through heteroskedasticity within a regime. To that end, we rely on three features combining them in a novel way in our model. First, we use that our structural matrix follows a MS process while the structural shocks' variances instead evolve according to a SV process. That way, the conditional variances can change at each period and shocks can thus be heteroskedastic within a regime. 
Second, we adopt a non-centered representation of the SV process as in \cite{Kastner2014} in our model and let the SV prior distribution concentrate on a homoscedastic model. This specification ensures that identification through heteroskedasticity is driven by the data rather than by the prior. Moreover, the non-centered representation simplifies the process of verifying  heteroskedasticity, reducing it to a single parameter. This allows us to introduce as the third novel feature, an easy way to confirm within-regime heteroskedasticity. Specifically, we define partial identification through heteroskedasticity within regimes extending the results in \cite{LSUW2022} who use SVARs without MS. We then verify identification in each regime by checking that a shock-specific regime-dependent log-volatility standard deviation parameter implies heteroskedastic shocks within a regime. As a new feature, we allow this volatility parameter to be regime-dependent. That way, confirming within-regime heteroskedasticity reduces to an easy check: verifying that this parameter is different from zero.

As the third contribution, we provide new empirical insights by showing that the data strongly support time variation in US monetary policy shock identification. We show that these shocks are partially identified through heteroskedasticity. Our results are based on a model with industrial production growth, consumer price index inflation, federal funds rates, term spreads, M2 money, and stock prices for January 1960 to September 2024. We include four alternative restriction patterns which allow interest rates to move simultaneously with (1) inflation and output \citep{leeperWhatDoesMonetary1996a}, (2) inflation, output, and term spread  \citep{Vazquez2013,diebold2006}, (3) inflation, output, and money \citep{Andres2006,BelongiaIreland2015}, and (4) only money \citep{LeeperRoush2003,Sims2006}. Additionally, we allow for two alternatives in the fourth row: no restriction or a zero on the simultaneous movement of term spread with interest rates, implying that monetary policy interventions narrow the term spread while the short term interest rate does not move \citep{Baumeister2013,Feldkircher2016}. 

Our results show that in the first regime, which dominates before 2000 and in more tranquil economic conditions, data strongly favor identifying shocks based on simultaneous movements in interest rates, inflation, output, and term spreads. This suggests that the monetary policy shock is best identified in relation to the entire bond yield curve, reflecting adjusted expectations on future economic developments in spirit of \cite{diebold2006}. In this regime, an expansionary shock increases inflation, stimulates long-term economic growth, flattens the yield curve, features a liquidity effect, and raises stock prices.  In contrast, the second regime, which mostly occurs after 2000, particularly during the global financial crisis, the COVID-19 pandemic, and periods of unconventional monetary policy, supports identifying monetary policy shocks through simultaneous movements in interest rates, inflation, output, and money supply. This aligns with the results in \cite{BelongiaIreland2015} who show that after 2008, 
 monetary policy shocks are best identified by the high volatility of monetary aggregates. In this regime, a shock that lowers interest rates significantly stimulates economic activity after a year, increases the money supply, and steepens the bond yield curve.

Our model contributes to the literature on SVAR identification by introducing data-driven, regime-dependent identification patterns which extends existing methods that impose such patterns \textit{a priori}.  For example, \cite{KimuraNakajima2016} analyse changes in Japan's monetary policy using two exogeneously given regimes, based on interest rates or bank reserves, embedded within a latent threshold time-varying parameter SVAR.
\cite{Bacchiocchi2017} impose two identification patterns with an additive relationship in a heteroskedastic SVAR and apply them to study the effects of monetary policy. They report differences in impulse responses during the Great Moderation compared to previous periods.
\cite{Arias2024} impose sign restrictions on impulse responses to identify US monetary policy shocks and add restrictions on contemporaneous structural parameters in \textit{a priori}-defined periods when the Fed used the federal funds rates as the main policy instrument. \cite{Pagliari2024} identifies unconventional monetary policy shocks of the ECB by restricting the expectation component spread's reaction to the shock to be positive before and zero after June 2014. Our approach challenges the common assumption of time-invariant shock identification, allows the data to decide on the identification pattern for various periods, and estimates the probabilities of regime occurrences. Moreover, our paper relates to the literature acknowledging uncertainty around identifying restrictions in SVAR models \citep[such as][]{BaumeisterHamilton2018,GiacominiKitagawaKitagawa2022} looking at it through the angle of uncertainty over time.

In the following, we introduce our new model in Section \ref{sec:Model}, and report the empirical evidence on time-varying identification of US monetary policy shocks in Section \ref{sec:resultsTVI}.

\section{Data-driven time-varying identification}\label{sec:Model}



\noindent To study time variation in the identification of structural shocks, we estimate the structural vector autoregressive model:
\begin{align}
	\mathbf{y}_t &= \sum_{l=1}^{p} \mathbf{A}_l \mathbf{y}_{t-l} + \mathbf{A}_d \mathbf{d}_t + \boldsymbol{\varepsilon}_t  \label{eq:rf} \\
\mathbf{B}\left(s_t, \boldsymbol{\kappa}(s_t)\right) \boldsymbol{\varepsilon}_t &= \mathbf{u}_t \label{eq:sf}
\end{align}
where the vector $\mathbf{y}_t$ of $N$ dependent variables at time $t$ follows a vector autoregressive model in equation~\eqref{eq:rf} with $p$ lags, a $d$-vector of deterministic terms $\mathbf{d}_t$, an error term $\boldsymbol{\varepsilon}_t$, and $N\times N$ autoregressive matrices $\mathbf{A}_l$, for lag $l = 1, \dots, p$, and the $N\times d$ matrix $\mathbf{A}_d$ of deterministic term slopes. The structural equation~\eqref{eq:sf} links the reduced form residuals to the structural shocks $\mathbf{u}_t$ in a time-varying linear relationship through the $N\times N$ structural matrix $\mathbf{B}\left(s_t, \boldsymbol{\kappa}(s_t)\right)$. 

Without further restrictions the structural parameters in $\mathbf{B}$ cannot be recovered from the reduced form. To solve the identification problem, exclusion restrictions can be used \citep[e.g.,][impose various exclusion restrictions to identify the monetary policy shock]{leeperWhatDoesMonetary1996a,Primiceri2005,Sims2006,Wu2016}. These restrictions impose that some contemporaneous relations in $\mathbf{B}$ are zero based on theoretical arguments and are commonly assumed to hold for the whole sample. However, which variables are contemporaneously related might change over time, and, hence, how the  structural shocks are identified could be time-varying. 

Our model allows for such time-variation in the identifying exclusion restrictions by modelling the relationship between the reduced- and structural-form shocks in equation~\eqref{eq:sf} regime-dependent. First, the parameter estimates in our structural matrix $\mathbf{B}\left(s_t, \boldsymbol{\kappa}(s_t)\right)$ depend on a regime indicator $s_t$ of a discrete Markov process \citep[see][]{hamilton1989} with $M$ states. We assume that the MS process, $s_t$, is stationary, aperiodic, and irreducible, with a $M\times M$ transition matrix $\mathbf{P}$ and an $N$-vector of initial values $\boldsymbol{\pi}_0$, denoted by $s_t \sim\mathcal{M}arkov(\mathbf{P}, \boldsymbol{\pi}_0)$. 

Second and new to the literature, exclusion restrictions on the contemporaneous relations to identify structural shocks can vary over time, denoted by the dependence on a regime-specific collection of TVI indicators, $\boldsymbol{\kappa}(s_t)$. The TVI indicator selects amongst $K$ patterns of the exclusion restrictions imposed on the $n^{\text{th}}$ equation for each regime. Thus, we can select the $n^{\text{th}}$ structural shocks' identification patterns which are specific to the regimes identified by the MS process. Hence, we do not only allow for time-variation in the exclusion restrictions but we endogenously determine which restrictions are supported over time. We do not associate \textit{a priori} an identifying pattern with a specific regime but estimate for each of the $K$ exclusion restriction schemes the occurrence probability within each regime. In effect, a MS model without TVI might be estimated and the TVI will only occur if data supports it. Moreover, irrespective of TVI, MS in the structural coefficients enables time-varying impulse responses, which  \cite{LutkepohSchlaak2022} find to be important in heteroskedastic models.  

While in general the exclusion restrictions itself are sufficient to achieve partial identification of the $n^{\text{th}}$ shock -- assuming that the researcher sets them according to the identification results in \cite{rrwz2010} -- just imposing them would be not enough to endogenously select which are supported over time. In fact, the data-driven selection among the exclusion restriction patterns is possible if these are overidenifying. We achieve this by identification of the $n^{\text{th}}$ shock through heteroskedaticity within a regime. To that end, our model includes heteroskedastic structural shocks that follow SV. We specify that the structural shocks at time $t$ are contemporaneously and temporarily uncorrelated and jointly conditionally normally distributed given the past observations on vector $\mathbf{y}_t$, denoted by $\mathbf{Y}_{t-1}$, with zero mean and a diagonal covariance matrix:
\begin{align}
\mathbf{u}_t \mid \mathbf{Y}_{t-1} &\sim\mathcal{N}_N\left(\mathbf{0}_N, \diag\left(\boldsymbol{\sigma}_t^2\right)\right) \label{eq:sfshock}
\end{align}
where $\boldsymbol{\sigma}_t^2$ is an $N$-vector of structural shocks' conditional variances at time $t$ filling in the main diagonal of the covariance matrix, and $\mathbf{0}_N$ is a vector of $N$ zeros. 

Notably, our model now uses two different specifications to model time-variation: we model regime-dependence of structural parameters and exclusion restrictions with an MS process while we model heteroskedasticity with SV. This way, we can achieve identification through heteroskedasticity within a regime which we discuss in detail in section \ref{sec:identheteroskedasticity}. Broadly speaking, a percentage of the observations is allocated to a specific regime. This subset of observations can exhibit heterosekcastic residuals within the regime as the residuals follow a SV process allowing for different conditional variances at each point in time. Importantly, our specification implies that the MS process is driven by changes in the identification pattern and the values of structural parameters. The former are extreme as either a zero is imposed or not. Hence, only overwhelming support of the data for an exclusion restriction would lead to a sharp probability to be in a specific regime. 

\subsection{Time-varying identification}\label{sec:TVI}

\noindent To search for changes in the identification pattern of the $n^{\text{th}}$ shock structural shock across regimes, we develop the TVI mechanism and implement it through a new hierarchical prior distribution.  We set a multinomial prior distribution for the TVI component indicator $\boldsymbol{\kappa}(m) = k \in\{1,\dots,K\}$ with flat probabilities equal to $\frac{1}{K}$, denoted by 
\begin{align} \label{eq:priormultinom}
\boldsymbol{\kappa}(m) \sim\mathcal{M}ultinomial\left(K^{-1},\dots,K^{-1}\right).
\end{align} The indicator values, $\boldsymbol{\kappa}(m) = k \in\{1,\dots,K\}$, are associated with a specific restriction pattern. Hence, this prior setup allows us to estimate posterior probabilities for each TVI component in each regime and provides data-driven evidence in favour of the underlying identification pattern. The indiscriminate multinomial prior reflects our agnostic view of which identification pattern applies.

We combine the prior in equation \eqref{eq:priormultinom} with a 3-level equation-specific local-global hierarchical prior on the structural matrix.  The prior guarantees high flexibility and avoids arbitrary choices as the level of shrinkage is estimated within the model. For simplicity, denote by $\mathbf{B}_{m.k}$ the matrix $\mathbf{B}\left(s_t, \boldsymbol{\kappa}(s_t)\right)$ for given realisations of the MS process $s_t=m$, where $m$ denotes one of the $M$ regimes, and the TVI indicator $\boldsymbol{\kappa}(m)=k$, where $k$ stands for one out of $K$ exclusion restrictions pattern. To implement different restriction patterns, we follow \cite{WaggonerZha2003} and decompose the $n^{\text{th}}$ row of the structural matrix, $[\mathbf{B}_{m.k}]_{n\cdot}$, into a $1\times r_{n.m.k}$ vector $\mathbf{b}_{n.m.k}$, collecting the unrestricted elements to be estimated, and an $r_{n.m.k}\times N$ matrix $\mathbf{V}_{n.m.k}$, containing zeros and ones placing the elements of $\mathbf{b}_{n.m.k}$ at the appropriate spots, $[\mathbf{B}_{m.k}]_{n\cdot} = \mathbf{b}_{n.m.k} \mathbf{V}_{n.m.k}$,
where $r_{n.m.k}$ is the number of parameters to be estimated for the restriction pattern $k$. The restriction matrix $\mathbf{V}_{n.m.k}$ reveals which parameters are estimated and on which an exclusion restriction is imposed for regime $m$ and restriction pattern $k$. We have $K$ such restriction matrices for the $n^{\text{th}}$ row and the $m^{\text{th}}$ regime, each associated with a TVI component indicator $\boldsymbol{\kappa}(m) = k \in\{1,\dots,K\}$.

Given the fixed regime $s_t = m$ and sampled TVI component indicator, $\boldsymbol{\kappa}(m)=k$, we assume that  $\mathbf{b}_{n.m.k}$ is zero-mean normally distributed with an estimated shrinkage. This equation-specific level of shrinkage for $\gamma_{B.n}$ follows an inverted gamma 2 distribution with scale $\underline{s}_{B.n}$ and shape $\underline{\nu}_B$. The former hyper-parameter has a local-global hierarchical gamma-inverse gamma 2 prior, with the global level of shrinkage determined by $\underline{s}_{\gamma_B}$:
\begin{align}
\mathbf{b}_{n.m.k}' &\mid \gamma_{B.n}, \boldsymbol{\kappa}(m)=k \sim\mathcal{N}_{r_{n.m.k}}\left(\mathbf{0}_{r_{n.m.k}}, \gamma_{B.n} \mathbf{I}_{r_{n.m.k}}\right),  \qquad \text{with}\label{eq:priorbb}\\
\gamma_{B.n}\mid \underline{s}_{B.n} &\sim\mathcal{IG}2\left( \underline{s}_{B.n}, \underline{\nu}_B \right), \quad \underline{s}_{B.n} \mid \underline{s}_{\gamma_B} \sim\mathcal{G}\left( \underline{s}_{\gamma_B}, \underline{\nu}_{\gamma_B} \right), \quad \underline{s}_{\gamma_B} \sim\mathcal{IG}2\left( \underline{s}_{s_B}, \underline{\nu}_{s_B} \right),\label{eq:priorb}
\end{align} 
where $\mathbf{I}_{r_{n.m.k}}$ is the identity matrix. The subscript $n$ denotes an equation-specific parameter. In our empirical application we set $\underline{\nu}_B=10, \underline{\nu}_{\gamma_B}=10 ,  \underline{s}_{s_B}=100,$ and $\underline{\nu}_{s_B}= 1$ to allow a wide range of values for the structural matrix elements and show extraordinary robustness of our results that applies as long as the shrinkage towards the zero prior mean is not too imposing. This conditional normal prior is equivalent to that by \cite{WaggonerZha2003} for a time-invariant model and to a variant of the generalized--normal prior of \cite{arias2018inference}.  

Our TVI prior specification is a multi-component generalisation of \cite{geweke1996variable}'s spike-and-slab prior. We show this in the following Corollary.

\begin{corollary} \label{cl2}
Let $[\mathbf{B}_{m}]_{n.i}$ denote the $(n,i)^{\text{th}}$ element of the regime-specific structural matrix $\mathbf{B}(m, \boldsymbol{\kappa}(m))$ for the $m^{\text{th}}$ regime with  indicator $\boldsymbol{\kappa}(m)=k$, where $k$ stands for one out of $K$ exclusion restrictions pattern. Consider the multinomial prior distributions for $\boldsymbol{\kappa}(m)=k$ as in \eqref{eq:priormultinom} and the 3-level equation-specific local-global hierarchical prior on $\mathbf{b}_{n.m.k}$ as in \eqref{eq:priorb}--\eqref{eq:priorbb}, where $\mathbf{b}_{n.m.k}$ collects the unrestricted elements of the $n^{\text{th}}$ row of $\mathbf{B}_{m.k}$. Let $K_{R}$ denote the components in which the element $[\mathbf{B}_{m}]_{n.i}$ is restricted to zero.

Then, the prior distribution for $[\mathbf{B}_{m}]_{n.i}$, marginalized over $\boldsymbol{\kappa}(m)$ is
\begin{align}
	[\mathbf{B}_{m}]_{n.i} \mid \gamma_{B.n} &\sim \frac{K - K_{R}}{K}\mathcal{N}(0, \gamma_B) + \frac{K_{R}}{K}\delta_0.
\end{align}
\end{corollary}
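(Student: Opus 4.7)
My plan is to derive the marginal prior for $[\mathbf{B}_{m}]_{n.i}$ by a direct application of the law of total probability over the discrete TVI indicator $\boldsymbol{\kappa}(m)$. The first step is to identify the conditional distribution of the scalar $[\mathbf{B}_{m}]_{n.i}$ given $\boldsymbol{\kappa}(m)=k$, using the Waggoner--Zha decomposition $[\mathbf{B}_{m.k}]_{n\cdot} = \mathbf{b}_{n.m.k}\mathbf{V}_{n.m.k}$. Since $\mathbf{V}_{n.m.k}$ has zeros and ones that simply map the unrestricted entries of $\mathbf{b}_{n.m.k}$ to their positions in row $n$, there are exactly two cases: either column $i$ is a restricted column under pattern $k$, in which case $[\mathbf{B}_{m}]_{n.i} = 0$ identically and its conditional law is the Dirac mass $\delta_0$; or column $i$ is unrestricted, in which case $[\mathbf{B}_{m}]_{n.i}$ equals a single coordinate of $\mathbf{b}_{n.m.k}$, and by marginalization of the $\mathcal{N}_{r_{n.m.k}}(\mathbf{0}, \gamma_{B.n}\mathbf{I})$ prior in \eqref{eq:priorbb}, this coordinate is distributed as $\mathcal{N}(0, \gamma_{B.n})$.

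The second step is to count cases. By the definition of $K_R$ in the statement, exactly $K_R$ out of the $K$ components correspond to the first case (point mass at zero), and the remaining $K - K_R$ correspond to the second case (a univariate normal with variance $\gamma_{B.n}$). Crucially, these conditional distributions do not depend on \emph{which} specific $k$ one conditions on within each of the two groups, which is what allows the sum to collapse cleanly.

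The third step is to apply the law of total probability with the uniform multinomial weights $\Pr(\boldsymbol{\kappa}(m)=k) = K^{-1}$ from \eqref{eq:priormultinom}:
\begin{equation*}
p\bigl([\mathbf{B}_{m}]_{n.i}\mid \gamma_{B.n}\bigr) = \sum_{k=1}^{K} \frac{1}{K}\, p\bigl([\mathbf{B}_{m}]_{n.i}\mid \gamma_{B.n}, \boldsymbol{\kappa}(m)=k\bigr),
\end{equation*}
which by the case split above becomes $\frac{K - K_R}{K}\mathcal{N}(0, \gamma_{B.n}) + \frac{K_R}{K}\delta_0$, matching the stated expression (modulo the harmless typographical difference between $\gamma_B$ and $\gamma_{B.n}$ in the corollary's right-hand side).

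There is essentially no serious obstacle here: the result is a bookkeeping consequence of the discrete mixture structure induced by the multinomial prior and the fact that exclusion restrictions act as a deterministic zero. The only point worth being careful about is to verify that conditioning on $\boldsymbol{\kappa}(m)=k$ for an unrestricted position indeed yields the marginal $\mathcal{N}(0,\gamma_{B.n})$ rather than something depending on the other coordinates of $\mathbf{b}_{n.m.k}$; this follows immediately from the independence structure of the spherical Gaussian prior in \eqref{eq:priorbb}.
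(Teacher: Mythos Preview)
Your proposal is correct and follows essentially the same approach as the paper's own proof: both identify that $[\mathbf{B}_{m}]_{n.i}$ is a Dirac mass at zero in $K_R$ components and $\mathcal{N}(0,\gamma_{B.n})$ in the remaining $K-K_R$, then mix with the uniform multinomial weights. Your write-up is simply more explicit about the Waggoner--Zha decomposition and the spherical-Gaussian marginalization step, but the underlying argument is identical.
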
 
\begin{proof}
The element $[\mathbf{B}_{m}]_{n.i}$ is restricted to zero in $K_{R}$ components, whereas it stays unrestricted and normally distributed in $K-K_{R}$ of them. Therefore, its prior distribution is a Dirac mass at value zero, $\delta_0$, with probability $\frac{K_{R}}{K}$, and normal with probability $\frac{K - K_{R}}{K}$. 
\end{proof}

We can easily calculate which set of exclusion restrictions is best supported by the data by estimating the posterior probability for each restriction pattern. Having obtained $S$ posterior draws using a Gibbs sampler, this probability is estimated by a fraction of posterior draws of the TVI component indicator, $\kappa(m)^{(s)}$, for which it takes a particular value $k$, for each of its values from 1 to $K$:
\begin{align}
	\widehat\Pr\left[\kappa(m)= k\mid \mathbf{Y}_T\right] = S^{-1}\sum_{s=1}^{S}\mathcal{I}(\kappa(m)^{(s)}=k).\label{eq:tviprob}
\end{align} 

Focusing on the identification of the $n\textsuperscript{th}$ structural shock, we implement the TVI mechanism for the $n\textsuperscript{th}$ structural equation and, thus, provide partial identification of the $n\textsuperscript{th}$ structural shock. In general, the TVI mechanism can also be applied to multiple rows in the structural matrix and, hence, can search for TVI of multiple structural shocks making it adaptable to a broad set of applications.  It is then straightforward to calculate the conditional posterior distribution of the $n\textsuperscript{th}$ equation specification given a particular specification of the $i\textsuperscript{th}$ row. Also, if a specific economic interpretation is given to a particular combination of exclusion restrictions for the structural matrix, one can estimate the joint posterior probability of such a specification by computing the fraction of the posterior draws for which this combination of TVI indicators holds.

\subsection{Within-regime identification through heteroskedasticity} \label{sec:identheteroskedasticity}

\noindent Our search of the identification pattern is feasible only if data can distinguish between the different identifying patterns. Our TVI mechanism in the $n^{\text{th}}$ row of the structural matrix requires that the $n^{\text{th}}$ structural shock is identified within each regime, absent the potential additional restrictions we are searching for. 
We partially identify the $n^{\text{th}}$ structural shock and the $n^{\text{th}}$ row of the structural matrix within a regime through heteroskedasticity. The partial identification through heteroskedasticity focuses on the identification of a specific shock and all parameters of the corresponding row of the structural matrix within a regime without the necessity of imposing additional restrictions.

Identification through heteroskedasticity within regimes is feasible because the volatility of the structural shocks evolves according to a SV process. The volatility can have different values at each point in time. Instead, the structural matrix follows a MS process where time-variation occurs via allocating a subset of observations to a specific regime. Hence, structural residuals can still be heteroskedastic within the regime (and not only across regimes). Once the identification of a shock through heteroskedasticity is verified in each of the regimes, any set of exclusion restrictions overidentify the shock, which enables the data to discriminate between the alternative restriction patterns.

We obtain partial identification of the $n^{\text{th}}$ row of the regime-specific structural matrix, denoted by $\mathbf{B}_m$, where the dependence on $k$ is neglected, within each regime  under the following corollary. 
\begin{corollary} \label{cl1}
Consider the SVAR model with the structural matrix following MS from equations \eqref{eq:rf}--\eqref{eq:sfshock}. Denote by $t_1^{(m)}, \dots, t_{T_M}^{(m)}$ the time subscripts of the $T_m$ observations allocated to the Markov-switching regime $s_t = m \in \{1,\dots,M\}$, where $\sum_{m=1}^{M}T_m = T$. Let $\boldsymbol{\sigma}_{n.m}^2 = (\sigma_{n.t_1^{(m)}}^2,\dots,\sigma_{n.t_{T_M}^{(m)}}^2)$ be the vector containing all conditional variances $\sigma_{n.t}^2$ of the $n^{\text{th}}$ row associated with the observations in the $m\textsuperscript{th}$ regime. Then the $n^{\text{th}}$  row of $\mathbf{B}_m$ is identified up to a sign if 
$\boldsymbol{\sigma}_{n.m}^2 \neq \boldsymbol{\sigma}_{i.m}^2$ for all $i \in \{1, \dots, N\}\setminus \{n\}$.
\end{corollary}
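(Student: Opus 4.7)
The plan is to reduce the within-regime identification problem to the heteroskedasticity-based partial identification result of \cite{LSUW2022} by conditioning on the regime indicator. Within regime $m$, the subsample indexed by $t_1^{(m)}, \dots, t_{T_m}^{(m)}$ sees a constant structural matrix $\mathbf{B}_m$, while the SV process still delivers time-varying diagonal conditional covariance matrices $\mathbf{D}_t = \diag(\boldsymbol{\sigma}_t^2)$. This makes the within-regime model observationally equivalent to a single-regime SVAR with stochastic volatility, which is precisely the setting studied in \cite{LSUW2022}.

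First I would express the reduced-form one-step-ahead covariance at each in-regime observation as $\boldsymbol{\Sigma}_t = \mathbf{B}_m^{-1} \mathbf{D}_t (\mathbf{B}_m^{-1})'$. Any observationally equivalent structural matrix $\tilde{\mathbf{B}}_m$ with diagonal variance matrices $\tilde{\mathbf{D}}_t$ must match these covariances for every $t \in \{t_1^{(m)}, \dots, t_{T_m}^{(m)}\}$, which is equivalent to requiring that $\mathbf{Q} = \tilde{\mathbf{B}}_m \mathbf{B}_m^{-1}$ satisfies $\mathbf{Q} \mathbf{D}_t \mathbf{Q}' = \tilde{\mathbf{D}}_t$ diagonal for every such $t$. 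Stacking the off-diagonal-zero constraints across the entire within-regime subsample then yields a system restricting the entries of $\mathbf{Q}$.

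Next I would invoke the partial identification argument of \cite{LSUW2022} applied to this system: when $\boldsymbol{\sigma}_{n.m}^2$ differs from $\boldsymbol{\sigma}_{i.m}^2$ for every $i \neq n$, any nonzero off-diagonal entry in the $n^{\text{th}}$ row of $\mathbf{Q}$ would impose a proportionality-type relation between the $n^{\text{th}}$ and some other shock's variance sequence within the regime, contradicting the assumption. Hence the $n^{\text{th}}$ row of $\mathbf{Q}$ must equal $\pm \mathbf{e}_n'$, where $\mathbf{e}_n$ is the $n^{\text{th}}$ canonical basis vector. Consequently, the $n^{\text{th}}$ row of $\tilde{\mathbf{B}}_m = \mathbf{Q} \mathbf{B}_m$ equals the $n^{\text{th}}$ row of $\mathbf{B}_m$ up to a sign, delivering the claimed partial identification.

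The hard part will be justifying the move from the MS-SVAR to the within-regime SVAR rigorously. The \cite{LSUW2022} argument is stated for an SVAR with a single, time-invariant structural matrix, so I would need to argue that the within-regime subsample inherits exactly that structure. This rests on the stationarity, aperiodicity, and irreducibility of the MS process assumed in Section~\ref{sec:Model}, which ensure the regime allocation is identified and the subsample is well-defined; conditional on such a subsample, $\mathbf{B}_m$ is genuinely time-invariant on it and the LSUW2022 machinery transfers verbatim, yielding identification of the $n^{\text{th}}$ row up to sign without any additional exclusion restrictions.
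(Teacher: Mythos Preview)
Your proposal is correct and follows essentially the same approach as the paper: both reduce the within-regime problem to a single-regime heteroskedastic SVAR and then invoke the partial identification matrix result of \cite{LSUW2022}. The paper's proof is terser and additionally cites Theorem~1 of \cite{LutkepohlWozniak2017} to emphasize that the inequality $\boldsymbol{\sigma}_{n.m}^2 \neq \boldsymbol{\sigma}_{i.m}^2$ need hold at only a single in-regime time point, a detail your argument implicitly accommodates but does not state explicitly.
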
 
\begin{proof}
The proof proceeds by the same matrix result as in Corollary 1 by \cite{LSUW2022} set for a heteroskedastic SVAR model without time-variation in the structural matrix. In line with Theorem 1 by \cite{LutkepohlWozniak2017}, the matrix result holds even if the inequality above is true for one period.
\end{proof}

According to Corollary~\ref{cl1}, the $n^{\text{th}}$ row of $\mathbf{B}_{m.\mathbf{k}}$ is identified through heteroskedasticity in the $m\textsuperscript{th}$ regime if changes in the conditional variances of the $n\textsuperscript{th}$ structural shocks evolve non-proportionally to those of other shocks within that regime. 
This result assures that the corresponding impulse responses, the $n^{\text{th}}$ column of $\mathbf{B}_{m.\mathbf{k}}^{-1}$, are identified as well. 
Global identification of $\mathbf{B}_m$ through heteroskedasticity is achieved when Corollary \ref{cl1} holds for all  $n \in \{1, \dots, N\}$. 

It is essential to verify identification through heteroskedasticity within each regime \citep[see the argument in][]{Montiel2022} and, to that end, we introduce as a new feature of our model that the SV process of the structural shocks has a regime-dependent standard deviation parameter. Each of the $N$ conditional variances, $\sigma_{n.t}^2$, follows a non-centered SV process with regime-dependent standard deviation parameter that decomposes the conditional variances into 
\begin{align} 
	\sigma_{n.t}^2 &= \exp\{\omega_n(s_t) h_{n.t}\}\quad\text{with} \label{eq:SV}\\
	 h_{n.t} &= \rho_n h_{n.t-1} + v_{n.t} \quad\text{and}\quad v_{n.t} \sim \mathcal{N}(0,1)
\end{align}
where $\omega_n(s_t)$ is a standard deviation parameter of the log-conditional variances $\log\sigma_{n.t}^2$. Equation \eqref{eq:SV} implies that both $h_{n.t}$ and $\omega_n(s_t)$ are identified up to a simultaneous sign change without affecting the conditional variance $\sigma_{n.t}^2$. The log-volatility of the $n^{\text{th}}$ structural shock at time $t$,  $h_{n.t}$, follows an autoregressive model with initial value $h_{n.0} = 0$, where $\rho_n$ is the autoregressive parameter and $v_{n.t}$ is a standard normal innovation. 
Note that the unconditional expected value of the conditional variances $\sigma_{n.t}^2$ depends on $\omega_n(s_t)$ and $\rho_n$. To fix it, we normalise the structural shocks' conditional variances around one by a prior that shrinks the model towards homoskedasticity which we explain in section \ref{sec:priorandposterior}. The specification extends the SV process by \cite{Kastner2014} 
by the regime-dependence in the parameter $\omega_n(s_t)$. Therefore, the Markov process $s_t$ drives the time variation in both $\mathbf{B}\left(s_t,  \boldsymbol{\kappa}(s_t)\right)$ and $\omega_n(s_t)$.  

We can verify the hypothesis of within-regime homoskedasticity of each of the shocks by checking the restriction \citep[see][for time-invariant heteroskedastic models]{Chan2018}
\begin{align}
\omega_{n}(m) = 0. \label{eq:omegarestric}
\end{align}
If it holds, the $n\textsuperscript{th}$ structural shock is homoskedastic in the $m\textsuperscript{th}$ regime. If it does not, then the structural shock's conditional variance changes non-proportionally to those of other shocks with probability 1, rendering the shock identified through heteroskedasticity. Consequently, two cases guarantee the identification of the $n^{\text{th}}$ structural shock. First, this shock, and no other, is homoskedastic within a specific regime. Hence, the condition in equation~\eqref{eq:omegarestric} only holds for the $n^{\text{th}}$ shock. Second, the $n^{\text{th}}$ structural shock is heteroskedastic in that regime implying that restriction~\eqref{eq:omegarestric} is not satisfied for the $n^{\text{th}}$ shock.

\subsection{Priors, Normalisation, and Posterior Sampler}\label{sec:priorandposterior}

\noindent In this section, we discuss the prior specification of the remaining parameters of the model and their estimation. We use the following notation: $\mathbf{e}_{n.N}$ is the $n\textsuperscript{th}$ column of the identity matrix of order $N$, $\boldsymbol{\imath}_n$ is a vector of $n$ ones, $\mathbf{p}$ is a vector of integers from 1 to $p$, and $\otimes$ is the Kronecker product.

Let the $N\times (Np+d)$ matrix  $\mathbf{A}=\begin{bmatrix}\mathbf{A}_1 &\dots & \mathbf{A}_p & \mathbf{A}_d\end{bmatrix}$ collect all autoregressive and constant parameters. Each row of matrix $\mathbf{A}$ is independent and follows conditional multivariate normal distribution with a 3-level global-local hierarchical prior on the equation-specific shrinkage parameters $\gamma_{A.n}$:
\begin{align}
[\mathbf{A}]_{n\cdot}' &\mid \gamma_{A.n} \sim\mathcal{N}_{Np+d}\left(\underline{\mathbf{m}}_{A.n}', \gamma_{A.n} \underline{\boldsymbol{\Omega}}_A\right),  \qquad \text{with}\\
\gamma_{A.n}\mid \underline{s}_{A.n} &\sim\mathcal{IG}2\left( \underline{s}_{A.n}, \underline{\nu}_A \right), \underline{s}_{A.n} \mid \underline{s}_{\gamma_{A.n}} \sim\mathcal{G}\left( \underline{s}_{\gamma_A}, \underline{\nu}_{\gamma_A} \right), \underline{s}_{\gamma_A} \sim\mathcal{IG}2\left( \underline{s}_{s_A}, \underline{\nu}_{s_A} \right),
\end{align}
where $\underline{\mathbf{m}}_{A.n} = \begin{bmatrix}\mathbf{e}_{n.N}' & \mathbf{0}_{N(p-1)+d}' \end{bmatrix}$,  $\underline{\boldsymbol{\Omega}}_A$ is a diagonal matrix with vector $\begin{bmatrix}\mathbf{p}^{-2\prime}\otimes\boldsymbol{\imath}_N' & 100\boldsymbol{\imath}_d'\end{bmatrix}'$ on the main diagonal, hence, incorporating the ideas of the Minnesota prior of \cite{Doan1984}. In our empirical application we set $\underline{\nu}_A, \underline{\nu}_{\gamma_A}, \underline{s}_{s_A},$ and $\underline{\nu}_{s_A}$ all equal to 10, which facilitates relatively strong shrinkage that gets updated, nevertheless. Providing sufficient flexibility on this 3-level hierarchical prior distribution occurred essential for a robust shape of the estimated impulse responses.

Each row of the transition probabilities matrix $\mathbf{P}$ of the Markov process follows independently a Dirichlet distribution, $[\mathbf{P}]_{m\cdot} \sim\mathcal{D}irichlet(\boldsymbol{\imath}_M + d_m \mathbf{e}_{m.M})$, as does the initial regime probabilities vector $\boldsymbol{\pi}_0$, $\boldsymbol{\pi}_0 \sim\mathcal{D}irichlet(\boldsymbol{\imath}_M)$.
To assure the prior expected regime duration of 12 months, we set $d_m = 11$ for models with $M=2$ in our application, and show the robustness of our results to alternative prior expected regime durations.

The prior distribution for the regime-specific standard deviation parameter, $\omega_n(s_t)$, follows a conditional normal distribution, $	\omega_n(m) \mid \sigma^2_{\omega.n} \sim\mathcal{N}\left( 0, \sigma^2_{\omega.n}\right)$, with a gamma distribution on the shrinkage level, $\sigma^2_{\omega.n} \sim\mathcal{G}(1, 1)$.  
Notably, the marginal prior for $\omega_n(m)$ combines extreme prior probability mass around the restriction for homoskedasticity $\omega_n(m)=0$ and fat tails. The latter enables efficient extraction of the heteroskedasticity signal from data assuring that a shock is heteroskedastic within a regime only if the data favour this outcome.  Our empirical results remain unchanged subject to variation of the prior scale of $\sigma^2_{\omega.n}$ ranging from values 0.1 to 2. Finally, the prior distribution for the autoregressive parameter of the log-volatility process is uniform over the stationarity region, $\rho_n \sim\mathcal{U}(-1,1)$. 

This prior setup following \cite{LSUW2022} implies that the structural shocks' conditional variance, $\sigma^2_{n.t}$, have a log-normal-product marginal prior implied by the normal prior distributions for the latent volatility, $h_{n.t}$, and for the standard deviation parameters, $\omega_{n}(m)$. The log-normal-product marginal prior has a pole at value one, corresponding to homoskedasticity, fat tails, and a limit at value zero when the conditional variance goes to zero from the right. Importantly, the extreme concentration of the prior mass at one and strong shrinkage toward this value standardises the structural shocks' conditional variances around value $\sigma_{n.t}^2 = 1$. This assures appropriate scaling in the estimation of the structural matrix. In our empirical model, the normalisation holds true as the estimated variances fluctuate closely around one.

To draw from the posterior distributions we use a Gibbs sampler. We provide an \textbf{R} package \textbf{bsvarTVPs} facilitating the reproduction of our results as well as new applications.\footnote{Access the \textbf{bsvarTVPs} package at \url{https://github.com/bsvars/bsvarTVPs}}

\section{Time-varying identification of US monetary policy shocks} \label{sec:resultsTVI}

\noindent Over the last sixty years, economic conditions, the arrival of new technologies, understanding the role of communication, and new strategies of conducting monetary policy in the US seem to affect the most efficient ways of extracting the structural shocks from data. Despite these circumstances, empirical studies on the effects of the US monetary policy shock assume that restrictions used to identify the structural shocks do not change over time.

We challenge this view and study time-variation in US monetary policy shock identification using data from January 1960 to September 2024 on log differenced industrial production, $y_t$, log consumer price index inflation, $\pi_t$, federal funds rates, $R_t$, term spread, $TS_t$, measured as the 10-year treasury constant maturity rate minus the federal funds rate, log M2 money supply, $m_t$,  and the log S\&P500 stock price index, $sp_t$ in a model with six lags. The Online Appendix gives details on the data. 

\begin{table}[t] 
	\caption{Monetary policy shocks identification patterns }\label{tab:Bidentrest} \vspace{-0.5cm}
 \begin{center}
	\begin{tabular*}{\textwidth}{@{}lccccccclcccccc@{}} 
	& \multicolumn{6}{c}{third row} & & & \multicolumn{6}{c}{fourth row}\\ \cline{1-7} \cline{9-15}
	& $y_t$ & $\pi_t$ & $R_t$ & $TS_t$ & $m_t$ &  $sp_t$ && & $y_t$ & $\pi_t$ & $R_t$ & $TS_t$ & $m_t$ &  $sp_t$\\\cline{2-7} \cline{10-15}	

	baseline &$*$ &$*$ & $*$ & 0 & 0 & 0 && unrest. &$*$ &$*$ & $*$ & $*$  & $*$  & $*$   \\\cline{2-7}\cline{10-15}
	with $TS$ &$*$ &$*$ & $*$ & $*$ & 0 & 0  && with 0 &$*$ &$*$ & 0 & $*$  & $*$  & $*$\\\cline{2-7}\cline{10-15}
	with $m$ &$*$ &$*$ & $*$ & 0 &$*$ & 0  && & & & & & &\\\cline{2-7}	
	 only $m$ &0 &0 & $*$ & 0 &$*$ & 0  && & & & & & &\\\cline{1-7} \cline{9-15}
	\end{tabular*}
\end{center}
\footnotesize
\renewcommand{\baselineskip}{11pt}
\textbf{Note:} $*$ indicates an unrestricted parameter and $0$ an exclusion restriction on the contemporaneous parameters in the third or fourth row of $\mathbf{B}$.

\end{table}

We specify four ($K_3=4$) exclusion restriction patterns for the third row to identify the US monetary policy shock, shown in the left panel of Table~\ref{tab:Bidentrest}.\footnote{We found that an unrestricted third row does not provide economically interpretable monetary policy shocks. On the contrary, the additional exclusion restrictions sharpen the identification and interpretations.  } The first identification pattern, that we call \textit{baseline}, imposes simultaneity in short-term interest rates, prices and output  \citep{leeperWhatDoesMonetary1996a}. The second identification pattern, \textit{with} $TS$, relatively to the baseline, allows additional simultaneous movements of the term spread, $TS_t$, in spirit of \cite{Vazquez2013} and \cite{diebold2006}. We consider this identification in response to the monetary policy shock identification strategies by \cite{Baumeister2013}, \cite{Feldkircher2016}, and \cite{Liu2017} but emphasise the spread's role in determining the interest rates. 

The third identification, \textit{with} $m$, allows for simultaneity in interest rates and money, on top of output and prices, inspired by \cite{Andres2006} and \cite{BelongiaIreland2015}. This identification suggests synchronous movements in interest rates and market liquidity levels. The fourth identification pattern, \textit{only} $m$, for the monetary policy shock highlights the simultaneity between short-term interest rates and a monetary aggregate, in line with the identification used by \cite{LeeperRoush2003} and  \cite{Sims2006}. 

We specify two ($K_4=2$) restriction patterns for the fourth row, shown in the right panel of Table~\ref{tab:Bidentrest}. Here, we either set no restrictions, labeled \textit{unrestricted}, or impose a zero on the simultaneous movement of term spread with interest rates, labeled \textit{with 0}. This zero restriction implies that  monetary policy interventions can have a narrowing effect on the term spread while the short term interest rate does not move. Such a restriction on the contemporaneous relations is in spirit with the notion that short term interest rates do not react to a term spread shock for four quarters imposed by \cite{Baumeister2013} and \cite{Feldkircher2016}. We impose a lower-triangular structure on the remaining rows of the structural matrix. 

\subsection{Model evaluation}\label{ssec:modeleval}

In order to verify whether data support TVI, we evaluate the forecasting performance of models.\footnote{Forecasting performance measures are unbiased as they are invariant to local identification of the SVARs up to the signs and ordering of structural equations, the signs of the SV log-volatility components, and MS regime labeling \citep[see][]{Geweke2007}.} More specifically, we extend a heteroskedastic SVAR in two dimensions: by applying a different number of regimes $M\in\{1,2,3\}$, and by using models with the multinomial prior over the identification patterns listed in Table~\ref{tab:Bidentrest} as well as models with fixed identification patterns over MS regimes.

We use relative predictive log-scores, as defined by \cite{geweke2010comparing}, to evaluate density forecasts and relative root-mean-squared forecast errors for point forecasts. In both cases the benchmark model is the heteroskedastic SVAR with $M=2$ MS regimes and the TVI selection mechanism proposed in this paper. Therefore, a relative log-score higher than 0 or a relative root-mean-squared forecast error lower than 1 indicates that a model outperforms the benchmark. We compute the measures based on 190 one- and twelve-step-ahead forecasts with the forecast origins starting in December 2007 and ending in September 2023.

\begin{figure}[t]
\subfloat[Density forecast performance]{	\includegraphics[trim={0.0cm 0cm 0.0cm 1cm},width=0.5\textwidth]{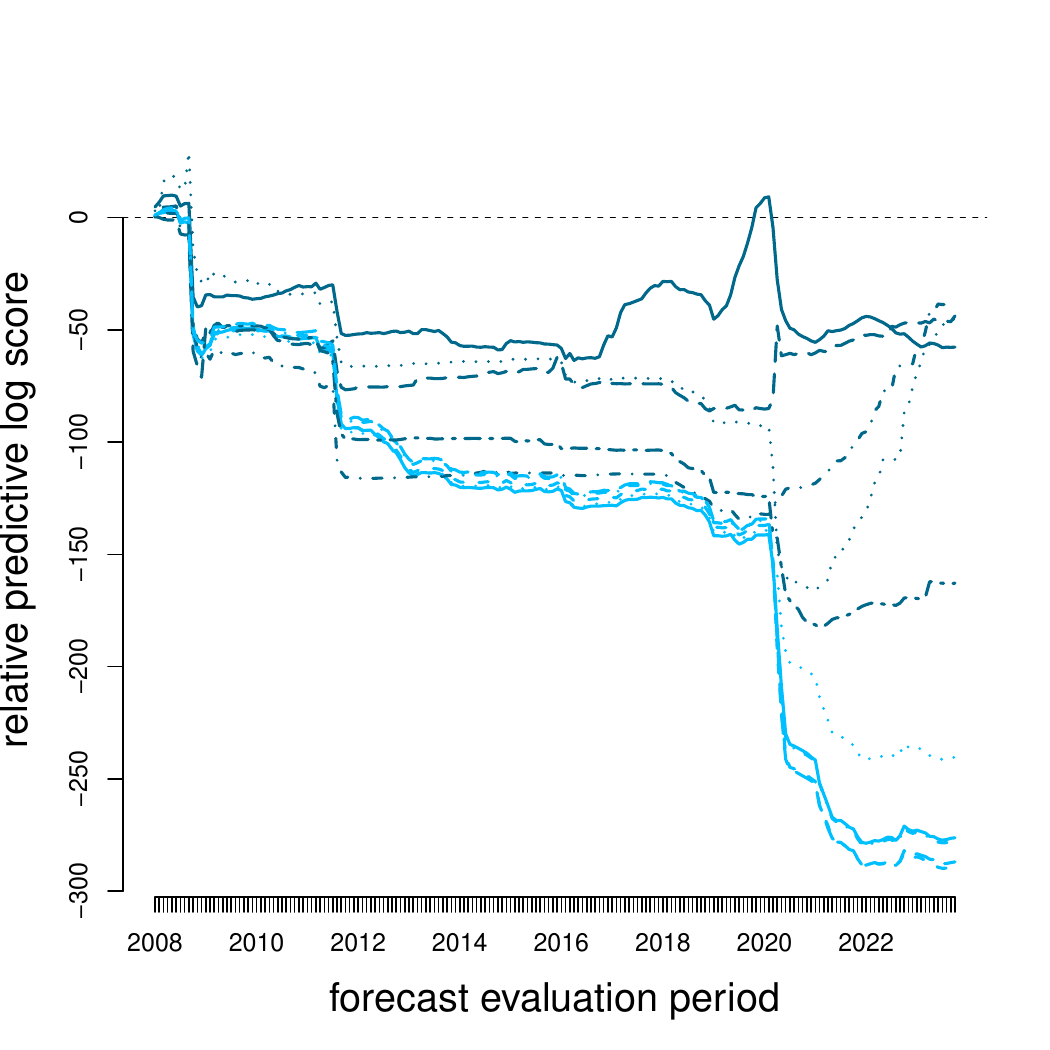}} 
\subfloat[Point forecast performance]{	\includegraphics[trim={0.0cm 0cm 0.0cm 1cm},width=0.5\textwidth]{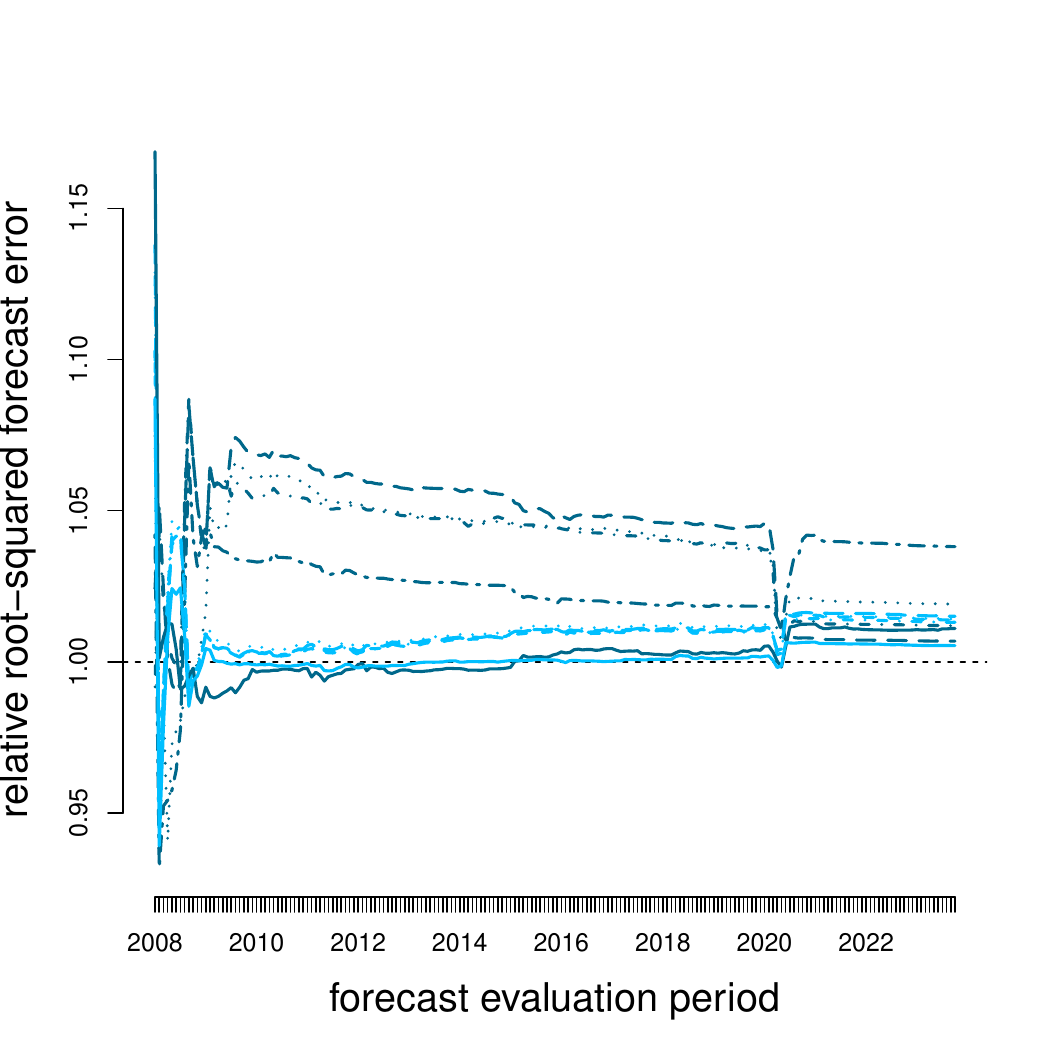}} 
\caption{Forecasting performance of selected models } \label{fig:forecasting}
\footnotesize
\renewcommand{\baselineskip}{11pt}
\textbf{ Note: }The legend for models includes: MS models (dark colour): \includegraphics[scale=0.09]{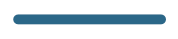} $M=3$ with TVI, \includegraphics[scale=0.09]{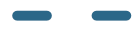}~$M=2$ baseline, \includegraphics[scale=0.09]{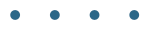} $M=2$ with $TS$, 	\includegraphics[scale=0.09]{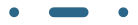} $M=2$ with $m$, \includegraphics[scale=0.09]{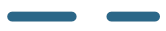} $M=2$ only $m$, no MS ($M=1$, bright colour): \includegraphics[scale=0.09]{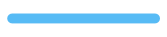} with TVI, \includegraphics[scale=0.09]{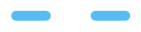} baseline, \includegraphics[scale=0.09]{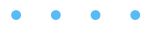} with $TS$, \includegraphics[scale=0.09]{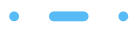} with $m$, \includegraphics[scale=0.09]{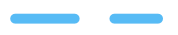} only $m$. Measures are relative to the benchmark model with $M=2$ and TVI.
\end{figure}

Figure \ref{fig:forecasting} reports the one-step-ahead forecasting performance based on density (panel a) and point forecasts (panel b). We find strong support for using TVI and MS. Overall, the model with two regimes and TVI used as the benchmark preforms best according to both measures, followed by the model with three regimes and TVI (solid dark line). The latter outperforms the benchmark at the beginning of the sample and around 2020 based on density forecasts and roughly until 2015 based on point forecasts. The model with TVI but no MS (solid bright line) exhibits accurate point forecasting performance, reasonably close to the benchmark, but performs poorly 
in density forecasting. Models with TVI outperform the alternatives with no TVI (dashed or dotted lines). The twelve-steps-ahead forecasting performance, discussed in the Online Appendix, provides a bit more nuanced picture. Still, the benchmark model is not uniformly outperformed by any other model according to both measures.

Based on these findings, we provide more insights using the benchmark model with two MS regimes and TVI.

\subsection{TVI probabilities}\label{ssec:TVIprob}

\noindent We find that data support TVI of US monetary policy shocks.  To show this, first, we estimate the posterior probability of the structural matrix featuring different identification in the two regimes:
$\Pr\left[\kappa_3(m=1)\neq \kappa_3(m=2) \lor \kappa_4(m=1)\neq \kappa_4(m=2) \mid \mathbf{Y}_T\right]$.
Given our model setup, the probability includes the intersection of changing the identification in both rows, third and fourth. In the two estimated MS regimes, data favor different identification schemes with strong posterior support of 0.87. 

Moving on to the monetary policy shock identification, we find remarkably sharp posterior probabilities of the TVI indicators reported in Figure~\ref{fig:TVI}. In the first regime indicated by the light color in Figure~\ref{fig:TVI}, we find strong data support that the monetary policy shock is identified by simultaneous movements of interest rates with output, inflation, and the term spread, combined with no restrictions in the forth row on the simultaneous movements of term spreads to all other variables. The posterior probability of the TVI component indicators for the restriction pattern \textit{with} $TS$ and \textit{unrestricted} are numerically equal to one. This suggests that the monetary policy shock is best identified with consideration to the whole bond yield curve signaling adjusted expectations on future economic developments in the spirit of \cite{diebold2006}. 

\begin{figure}[t]
	\includegraphics[trim={0.0cm 1.0cm 0.0cm 2.0cm},width=1\textwidth]{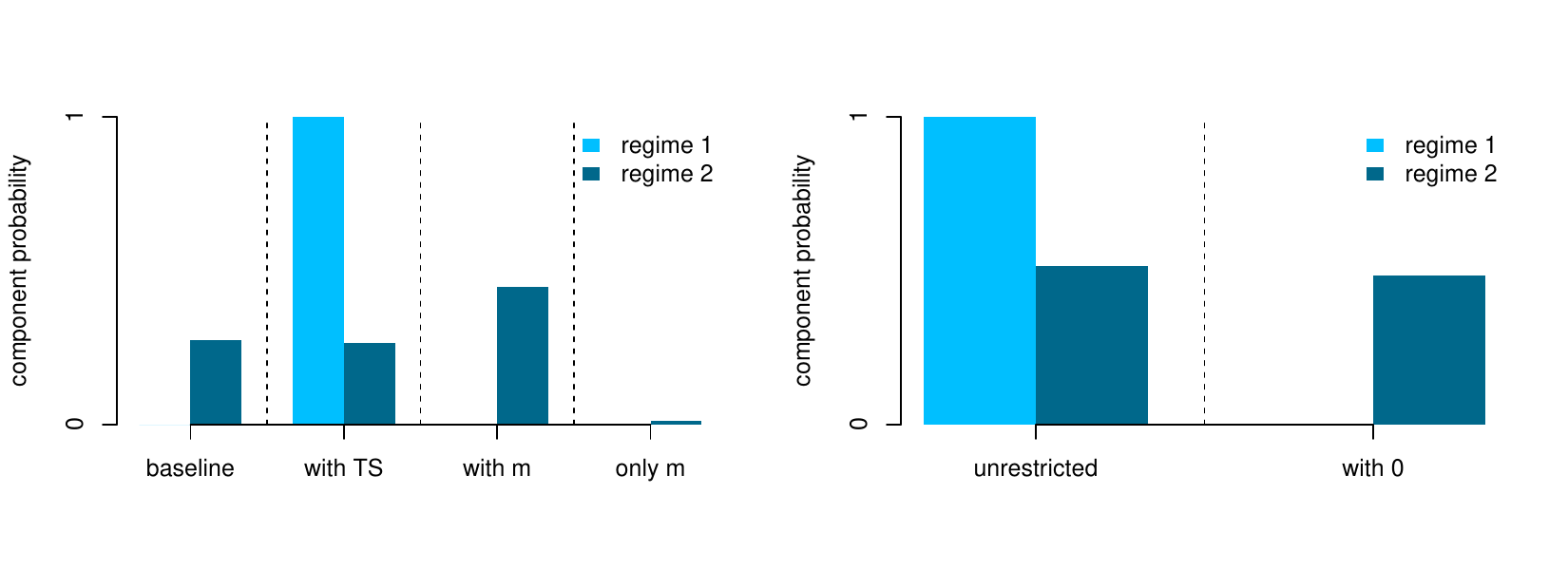} 
		\begin{center}
\begin{tabular}{p{1in}p{2in}p{2in}}
&(a) \textit{third row} & (b) \textit{fourth row}
\end{tabular}
\end{center}
	\caption{Posterior probabilities of the TVI components in both regimes} \label{fig:TVI} 
\end{figure}

In the second regime visualized by the dark color in Figure~\ref{fig:TVI}, we find evidence that data support a mixed identification of the monetary policy shock. The highest posterior probability is attached to the restriction pattern allowing interest rates also to respond simultaneously to changes in the monetary aggregate. The posterior value of the TVI component indicator for the third exclusion restriction pattern (\textit{with} $m$) is 0.45. The posterior probability of the remaining components is lower, with first and second both having 0.27 posterior probability and the fourth (\textit{only} $m$) 0.01. 

Our findings are in line with \cite{BelongiaIreland2015} supporting that after 2008, 
the monetary policy shock is best identified considering the high variability of the monetary aggregates. While the Fed officially targeted money supply before 1982, \cite{BelongiaIreland2015} argue that unconventional monetary policy can be seen as an attempt to increase money growth. Similar to the support of data for money in the identification pattern for the monetary policy shock in the second regime, the authors show that excluding a monetary measure from the interest rate rule is rejected by the data in the sample to 2007. Similarly, \cite{LutkepohlWozniak2017} find empirical evidence for monetary policy shock identification allowing for simultaneity in interest rates and money for data up to 2013. Neither of these papers considers identification pattern including the term spread or time variation of the structural matrix.

The posterior probabilities of the TVI component indicator for the fourth row show a dominant role of the unrestricted identification in the first regime, and equal support for \textit{unrestricted} and \textit{with 0} in the second one. This indicates that in the second regime, which predominately covers the period of the zero-lower-bound environment, monetary policy interventions partly have a narrowing effect on the term spread while the short term interest rate does not move. However, since the second regime also covers periods outside the zero lower bound period, we find only weak empirical support for the restriction which is in spirit of the assumption imposed by \cite{Baumeister2013} and \cite{Feldkircher2016} that short term interest rates do not immediately react to a term spread shock. 

Selecting two different identification patterns over time is remarkably robust with respect to extending or limiting the set of possible exclusion restrictions in the search (such as allowing for interest rates to move simultaneously with stock prices, including patterns with zero restrictions on money and stock prices or zero restrictions on interest rates, money and stock prices in the fourth row, excluding the variant \textit{only m}, or allowing only for simultaneous movements of interest rates, output, and inflation with term spread or money in the third row). Moreover, models with changes in the prior set-ups on the contemporaneous relations, or changes in the variables (using the consumer price index and industrial production index instead of growth rates, or the interpolated GDP deflator instead of CPI inflation) support time-variation in the identification of the monetary policy shock.\footnote{While we focus on reporting robustness to the changes in the prior specification most relevant to the discussed feature, all our main findings regarding TVI (Section~\ref{ssec:TVIprob}), identification through heteroskedasticity (\ref{ssec:IVH}), and Markov-switching probabilities (\ref{ssec:MS}) are robust to changing hyper-parameters on the priors set on the contemporaneous relationships, autoregressive coefficients, SV or the Markov-process, to different lag lengths (from 1 to 12), and using log first differences of money and stock prices. Results are available upon request.} We report the robustness results in the Online Appendix.

\subsection{Within-regime identification through heteroskedasticity}\label{ssec:IVH}

\noindent Our search for regime-specific identification patterns requires that the exclusion restrictions are over-identifying and, thus, can be tested using statistical methods. We achieve this by identifying the monetary policy shock through heteroskedasticity. 

\begin{figure}[h]
\subfloat[$\omega_{3}$]{	\includegraphics[trim={0.0cm 0.5cm 0.0cm 2.0cm},width=0.5\textwidth]{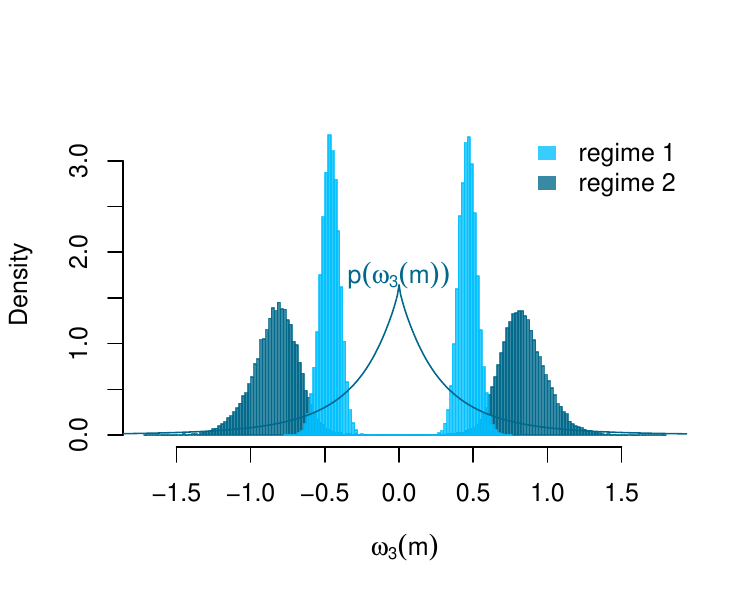}} 
\subfloat[$\omega_{4}$]{	\includegraphics[trim={0.0cm 0.5cm 0.0cm 2.0cm},width=0.5\textwidth]{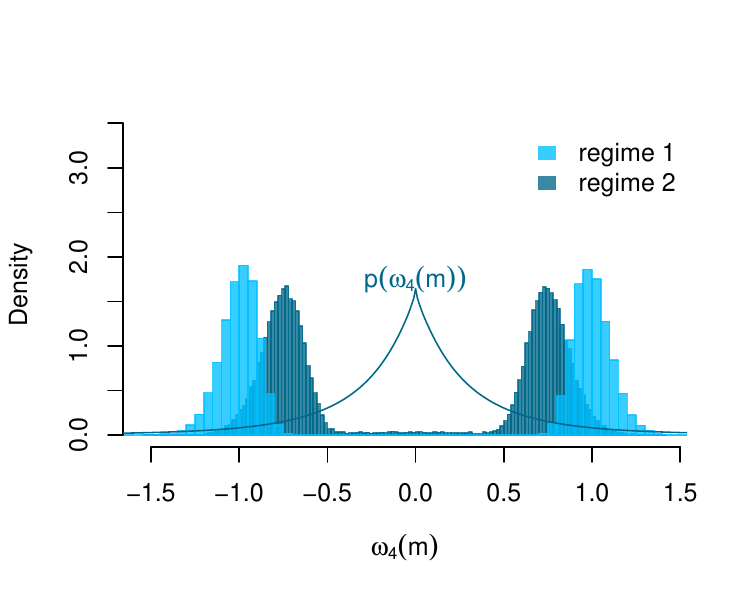}} 
	\caption{Marginal posterior and prior densities of the regime-specific standard deviations } \label{fig:omega}
\end{figure}

Figure~\ref{fig:omega} plots histograms of the standard deviation of the third shock $\omega_3(m)$ and of the fourth shock $\omega_4(m)$ for both regimes, $m\in\{1,2\}$. We find clear evidence for heteroskedasticity within both regimes for the third and fourth equation visualized by the bi-modality of the posterior distributions. The bi-modality results from the fact that $\omega_3(m)$ and $\omega_4(m)$, as well as $h_{n.t}$, are identified up to their simultaneous change of sign, which is implied by equation \eqref{eq:SV}. This is taken into account when verifying heteroskedasticity via evaluating  $\omega_n(m)=0$ following \cite{Chan2018}. Verifying that $\omega_3(m)$ and $\omega_4(m)$ differ from zero implies identifying the underlying equations. In both cases, the mass of the posterior distributions is away from zero. Therefore, the third and fourth row are identified through heteroskedasticity and our selection of identifying restrictions is valid in both regimes. We then label the third shock as the monetary policy shock based on the exclusion restrictions imposed. 

We find that the identification of the third and fourth row through heteroskedasticity is robust to changing the scale hyper-parameter of the prior controlling for the regime-dependent standard deviation parameter level to 0.5 or 2, see results in the Online Appendix. Also, using alternative measurements for our variables, such as the consumer price index, industrial production, or the interpolated GDP deflator, has no impact on identification through heteroskedasticity.

\subsection{Two Markov-switching regimes}\label{ssec:MS}

\noindent Having established that the data support regime-specific identification of US monetary policy shocks, we now take a closer look at the two estimated MS regimes. Figure~\ref{fig:regimeprob} shows the estimated regime probabilities for the second regime (a gray shaded area), together with highlighted periods when the probability for the second regime is larger than 0.8 (dark-colored thick line), and specific events (black vertical lines topped by letters).

\begin{figure}[t!]
	
		\includegraphics[trim={1.5cm 0.5cm 1.0cm 2.0cm},width=1\textwidth]{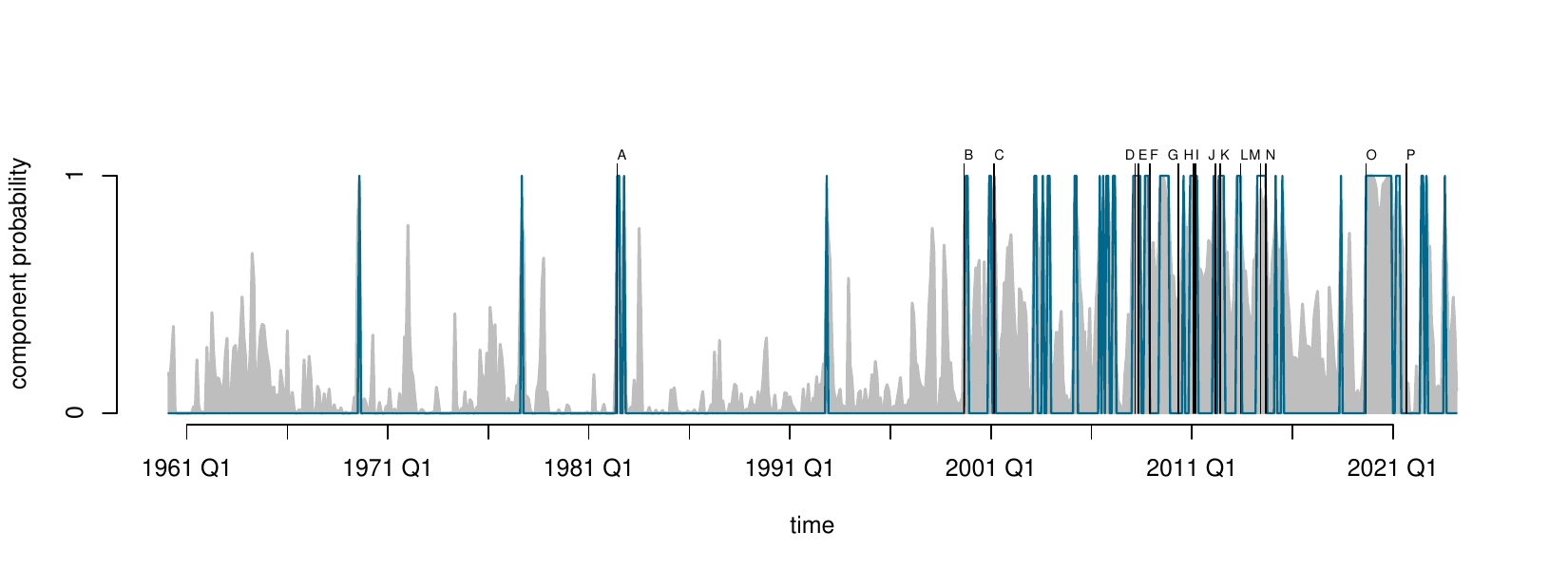}
		 \scriptsize
\textbf{List of events:}\\[1ex] 
	\begin{tabular}{p{0.2cm}p{1.3cm}p{11cm}} \hline\hline
		A &  Dec 1982& peak of 1981-1982 recession\\
		B	& Mar 2000 & dot-com bubble burst \\
		C	& Sep 2001  & September 11 attacks\\
		D & Sep 2008 & bankruptcy of Lehman Brothers on September 15, 2008 \\
		E & Nov 2008  & FOMC announces Quantitative Easing 1\\
		F & Jun 2009 & NBER declares end of the US great recession\\
		G	&		Nov 2010 & FOMC announces Quantitative Easing 2 \\
		H	&		Aug 2011 & FOMC announces to keep the federal funds rate  at its effective zero lower bound ``at least through mid-2013''$^*$\\ 
		I & Sep 2011  & FOMC announces Operation Twist\\
		J	&			Sept 2012 & FOMC announces Quantitative Easing 3 \\
		K	&		Dec 2012 & FOMC announces to purchase 45B\$ of longer-term Treasuries per month for the future, and to keep the federal funds rate at its effective zero lower bound\\\
		L	&	Dec 2013 & FOMC announces to start lowering the purchases of longer-term Treasuries and mortgage-backed securities (to \$40B and \$35B per month, respectively)\\
		M	&		Dec 2014 & FOMC announces  ``it can be patient in beginning to normalize the stance of monetary policy''\\
		N &		Mar 2015 & FOMC announces ``an increase in the target range for the federal funds rate remains unlikely at the April FOMC meeting''\\
		O & Mar 2020& start of COVID-19 pandemic. First unscheduled FOMC meeting on March 3 in response to the outbreak of the pandemic announcing to ``lower the target range for the federal funds rate by 1/2 percentage point'' and to ``use its tools and act as appropriate to support the economy.'' On March 23, FOMC statement to ``continue to purchase Treasury securities and agency mortgage-backed securities in the amounts needed to support smooth market functioning''.\\
		P & Mar 2022& First indications of changes in monetary policy with FOMC annoucing that one member voted against keeping the federal funds rate at the same rate. In May and June the federal funds rate was increased with the largest hikes since May 2000 and 1994, respectively. \\
		\hline\hline
\end{tabular}

 $^*$ all quotes from \url{https://www.federalreserve.gov/newsevents/pressreleases.htm}.
\caption{Regime probabilities of the Markov process} \label{fig:regimeprob}
\end{figure}

Regime probabilities are volatile and often not sharp. This is different from ``traditional'' MS-VAR models where the MS process is driven by changes in autoregressive parameters or volatility of the residuals. In our model, the MS probabilities predominantly capture changes in the contemporaneous structural relations.  Importantly, the two regimes are characterized by two different identification schemes as revealed by the estimated TVI probabilities. Hence, the estimated regime-specific structural matrices differ sharply by imposing a zero on a specific parameter versus estimating this parameter freely.   

We interpret high regime probabilities as overwhelming evidence that monetary shock identification is best captured by the TVI selected identification pattern during this period. Especially, the second regime's supported identification of a monetary policy shock requires strong simultaneous movements in money and interest rates. Hence, high probabilities for the second regime show up in times when data feature such strong simultaneity, which we discuss further in Section~\ref{ssec:3results}. The identification scheme selected in the second regime captures such data characteristics for the sake of sharper identification of the shocks. 

Regime one is predominant in the first part of the sample until 2000. It is present in periods of rather normal economic developments (aka non-crisis times). After 2000 the second regime occurs more frequent and the periods are characterized by frequent switches between the two regimes. The second regime clearly prevails during crisis periods, in particular the financial and COVID-19 ones. In the aftermaths of these events, regime 2 gains persistence. Additionally, it is present at earlier extra-ordinary times such as the peak of the 1981--1982 crisis, the burst of the dot-com bubble in March 2000, or 9/11. It also predominately covers periods of unconventional monetary policy actions such as Quantitative Easing 1, Operation Twist, and several subsequent announcements of the Fed, starting in August 2011, stating that they would keep the federal funds rate at its effective zero-lower bound for a substantial period.

\begin{figure}[t]
		\includegraphics[trim={1.5cm 0.5cm 1.0cm 2.0cm},width=1\textwidth]{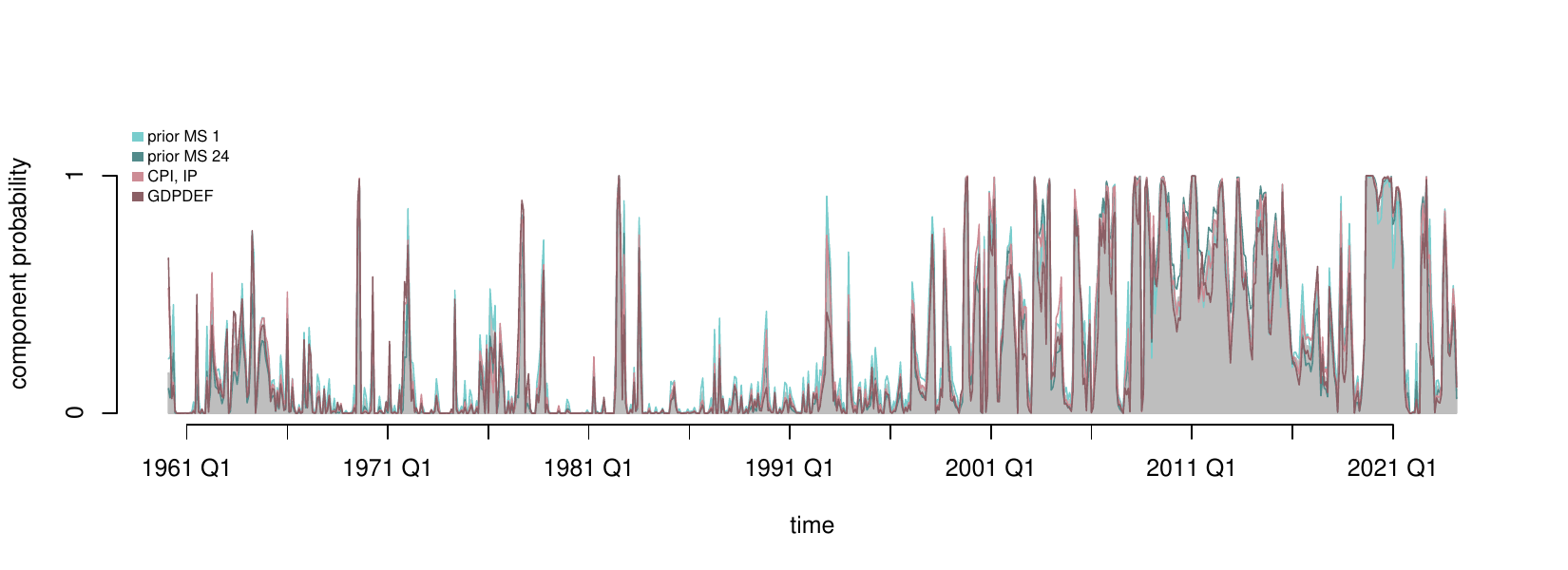} 
		\caption{Regime probabilities of the Markov process across alternative models } \label{fig:regimeprob_rob}
\end{figure}

Notably, we find that the regime allocations are robust to various changes in the model specification. The estimated regime probabilities, shown in Figure~\ref{fig:regimeprob_rob}, are very similar to a model where we impose the prior belief that regimes are less persistent or more persistent by setting the prior expected regime duration of the Markov process to one month, called \emph{prior MS 1}, or 2 years doubling the expected prior regime duration, called \emph{prior MS 24}, respectively. Moreover, models using alternative measurements for our variables, such as the consumer price index and industrial production index, model called \emph{CPI, IP}, or the interpolated GDP deflator, model called \emph{GDPDEF}, yield similar regime allocation probabilities. 

\subsection{Three results illustrating regime differences}\label{ssec:3results}


Next, we discuss differences across the regimes in more detail by looking at sample moments of data, estimates of the contemporaneous relations in the interest rate equations, and dynamic responses to US monetary policy shocks.

\paragraph{Regime-specific sample moments of data}
Regime-specific sample moments reported in Table~\ref{tab:regimeparam} show that the federal funds rate is on average lower and the term spread higher in the second regime while they both exhibit lower volatility then. Money is twice as volatile in the second regime. Moreover, the covariance between interest rates and term spreads is higher in the first than in the second regime. Differences are strong with respect to the relation between money and interest rates. In the first regime the covariance is positive, while in the second it is negative.

\begin{table}[h!]
	\caption{Regime-specific sample moments of data} \label{tab:regimeparam}
	\vspace{-0.5cm}
	\begin{center}
	\begin{tabular}{lccccccc}
		\hline\hline
& \multicolumn{3}{c}{Regime 1} && \multicolumn{3}{c}{Regime 2} \\   
& mean & sd & cov with $R_t$ &&  mean & sd &cov with $R_t$ \\ \cline{2-4} \cline{6-8} 
  $R_t$ & 5.64 & 3.62 &  && 2.54 & 2.73 &  \\ 
  $TS_t$ & 0.84 & 1.69 & $-$3.79 && 1.43 & 1.34 & $-$2.15 \\ 
  $m_t$ &  6.33 & 3.86 & 1.43 && 7.38 & 8.63 &$-$2.03 \\ 
	\hline\hline
	\end{tabular}
	\end{center}
	\footnotesize
\renewcommand{\baselineskip}{11pt} 
 \textbf{Note:} Table reports sample means, standard deviations, and the covariance with interest rates for variables in both regimes. The regime-specific moments are given for the series in first differences for $m_t$.
\end{table}

The sample regime-specific moments indicate that the second regime captures monetary policy that is characterized by cutting the interest rates and narrowing the bond yield curve supplemented by expanding the monetary basis. This supports our TVI results on the dominate role of money in the second regime. It suggest that the differences in the correlation of the variables picked up in our model by the contemporaneous relations are a decisive feature to determine regime allocation and differences across the regimes.

\paragraph{Parameter estimates of the interest rate equation}
The posterior means of the contemporaneous coefficients in the interest rate equation, the third row of the structural matrix with pattern \textit{with TS} and \textit{with m}, with the bounds of the 68\% highest density interval in parentheses,
\begin{align*}
	\text{Regime 1}&& \quad \underset{(-0.00;\ 0.01)}{0.01} y_t \underset{(-0.09;\ -0.05)}{-0.07} \pi_t   \underset{(3.45;\ 3.82)}{+3.64}  R_t  \underset{(3.92;\ 4.25)}{+4.09} TS_t &= \text{lags} + \hat{u}_t^{mps}\\
	\text{Regime 2}&& \quad \underset{(-0.03;\ -0.02)}{-0.03}y_t \underset{(-0.05;\ 0.02)}{-0.02} \pi_t   \underset{(11.32;\ 15.94)}{+14.00}  R_t  \underset{(-0.71;\ -0.02)}{-0.45} m_t &= \text{lags} + \hat{u}_t^{mps}
\end{align*} 
show notable differences across the regimes. In the first regime, interest rates and term spreads are clearly contemporaneously related. The posterior mean of the contemporaneous coefficient on term spreads is positive, with strong evidence that it is different from zero. Interest rates move simultaneously with inflation, whereas the coefficient on output is not different from zero.  

In the second regime, money aggregate plays a dominant role, as evidenced by a significant and relatively highest in absolute terms coefficient on this variable. This regime also observes an increase in the importance of output relative to inflation. The coefficient on the latter is not different from zero. This fact, together with the timing of the second regime, is in line with the shift in the emphasis in the monetary policy after 2000 described by \cite{ba2016}. For the remaining two identification pattern which get posterior weights in the second regime the posterior estimates
\begin{align*}
	\text{Regime 2:}&& \quad \underset{(-0.03;\ -0.01)}{-0.02} y_t \underset{(-0.04;\ 0.02)}{-0.01} \pi_t   \underset{(12.24;\ 16.87)}{+14.86}  R_t   &= \text{lags} + \hat{u}_t^{mps}\\
	\text{Regime 2}&& \quad \underset{(-0.03;\ -0.01)}{-0.02} y_t \underset{(-0.04;\ 0.02)}{-0.01} \pi_t     \underset{(11.93;\ 16.58)}{+14.83}  R_t \underset{(-0.45;\ 0.53)}{+0.02} TS_t &= \text{lags} + \hat{u}_t^{mps}
\end{align*} support the emphasis on output while at the same time the coefficient on term spreads is not different from zero.

\paragraph{Dynamic responses to US monetary policy shocks}

\noindent We find notable differences in the dynamic responses to the US monetary policy shock across the regimes. Based on its effects in the first regime, monetary policy can be characterized as conventionally inflation-targeting, whereas in the second one, as expansionary with specific macro-financial interactions. Our analysis is based on Figure~\ref{fig:IRF} reporting the median impulse responses to a monetary policy shock decreasing interest rates by 25 basis points in the first (light color) and second (dark color) regime together with the 68\% posterior highest density sets. We discuss the effects of an expansionary monetary policy shock.

\begin{figure}[t]
	\subfloat[response of $y$]{\includegraphics[trim={0.0cm 0.5cm 0.0cm 2.0cm},width=0.33\textwidth]{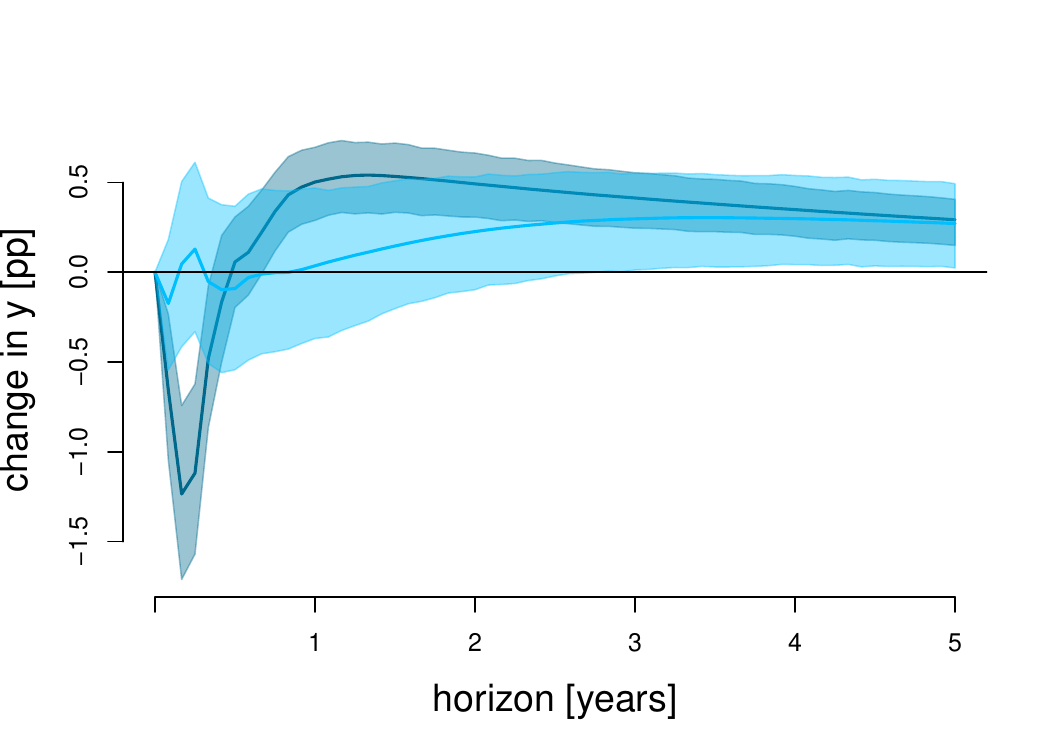}}
	\subfloat[response of $\pi$]{\includegraphics[trim={0.0cm 0.5cm 0.0cm 2.0cm},width=0.33\textwidth]{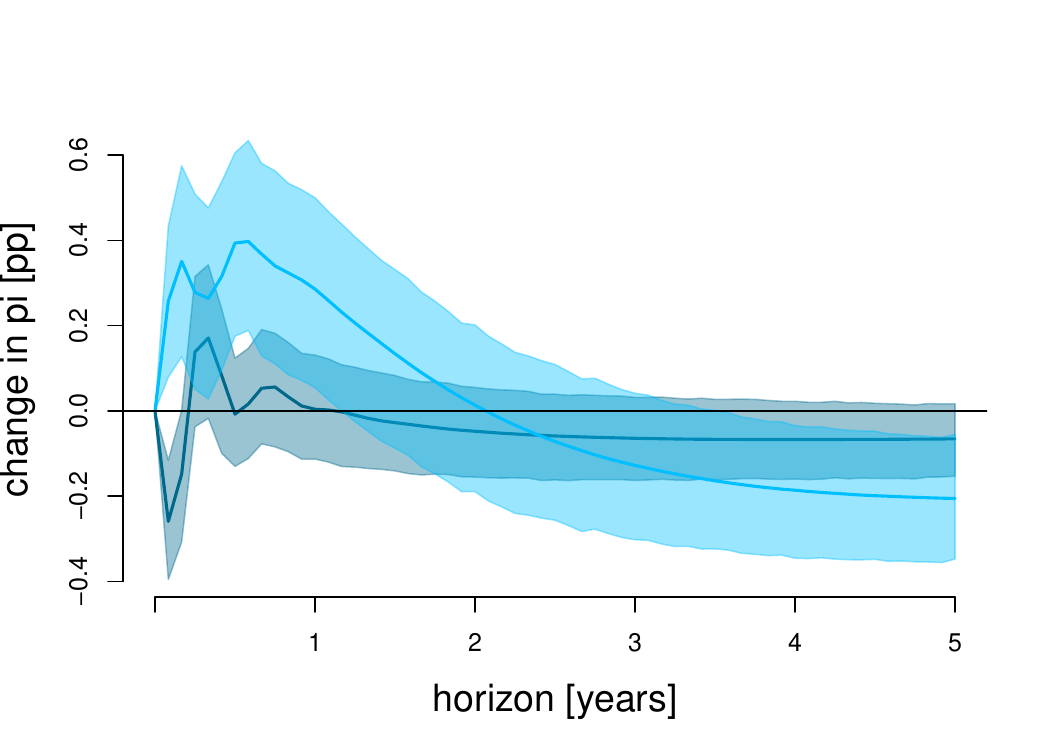}}
	\subfloat[response of $R$]{	\includegraphics[trim={0.0cm 0.5cm 0.0cm 2.0cm},width=0.33\textwidth]{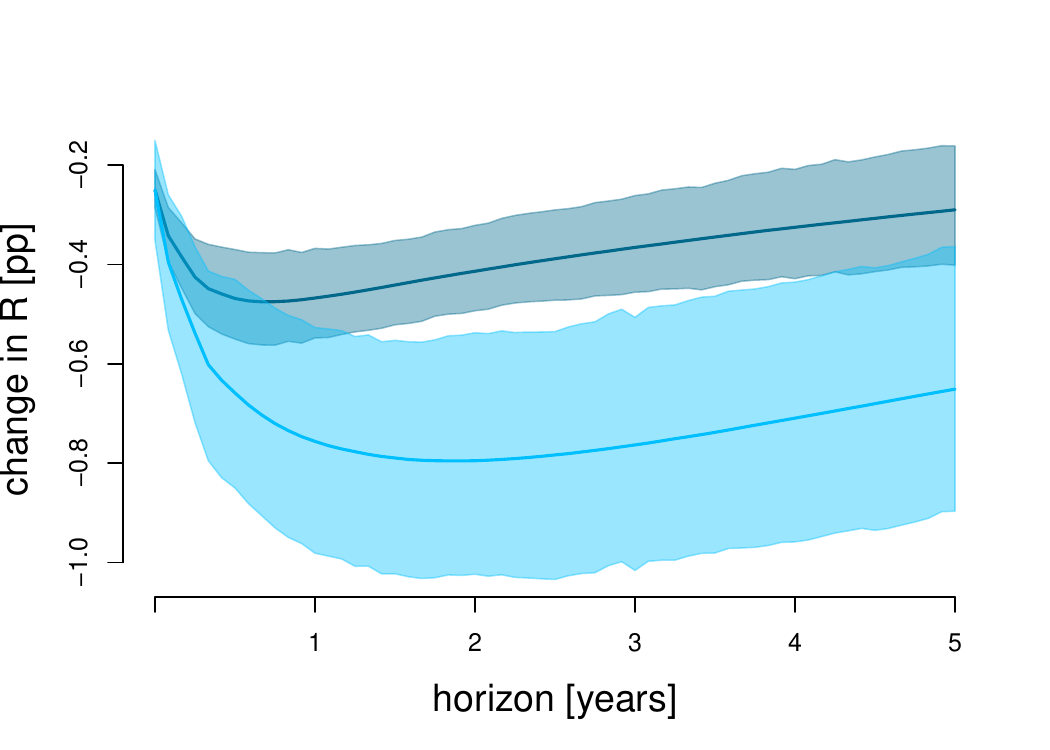}} \\
	\subfloat[response of $TS$]{	\includegraphics[trim={0.0cm 0.5cm 0.0cm 2.0cm},width=0.33\textwidth]{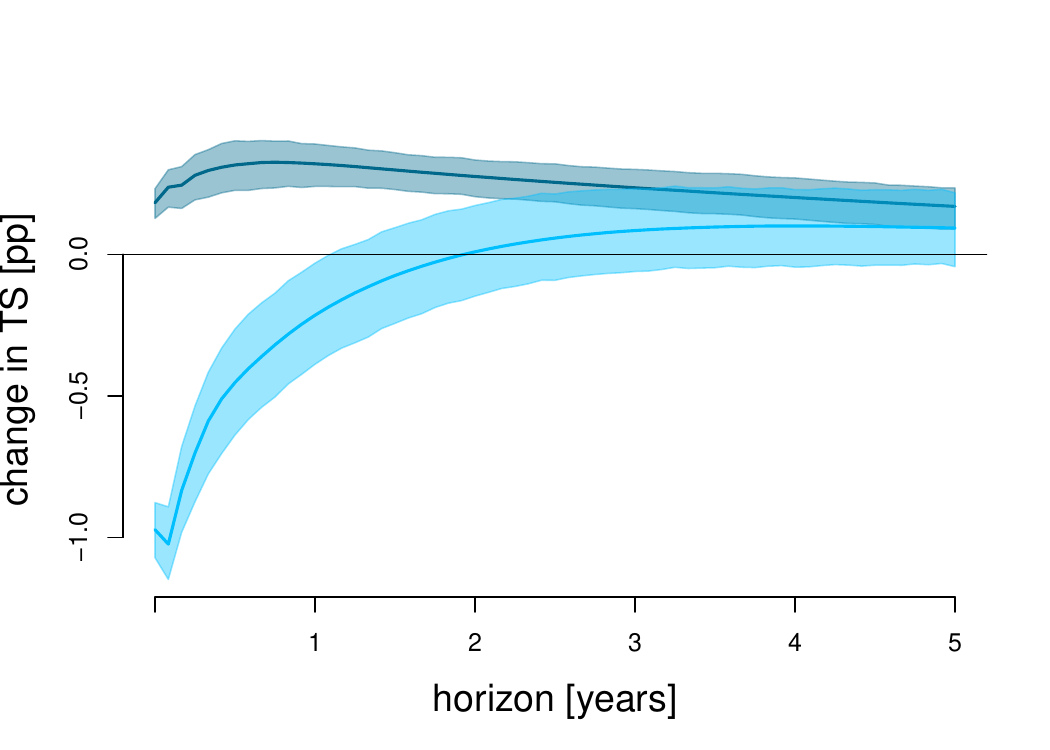}}
	\subfloat[response of $m$]{	\includegraphics[trim={0.0cm 0.5cm 0.0cm 2.0cm},width=0.33\textwidth]{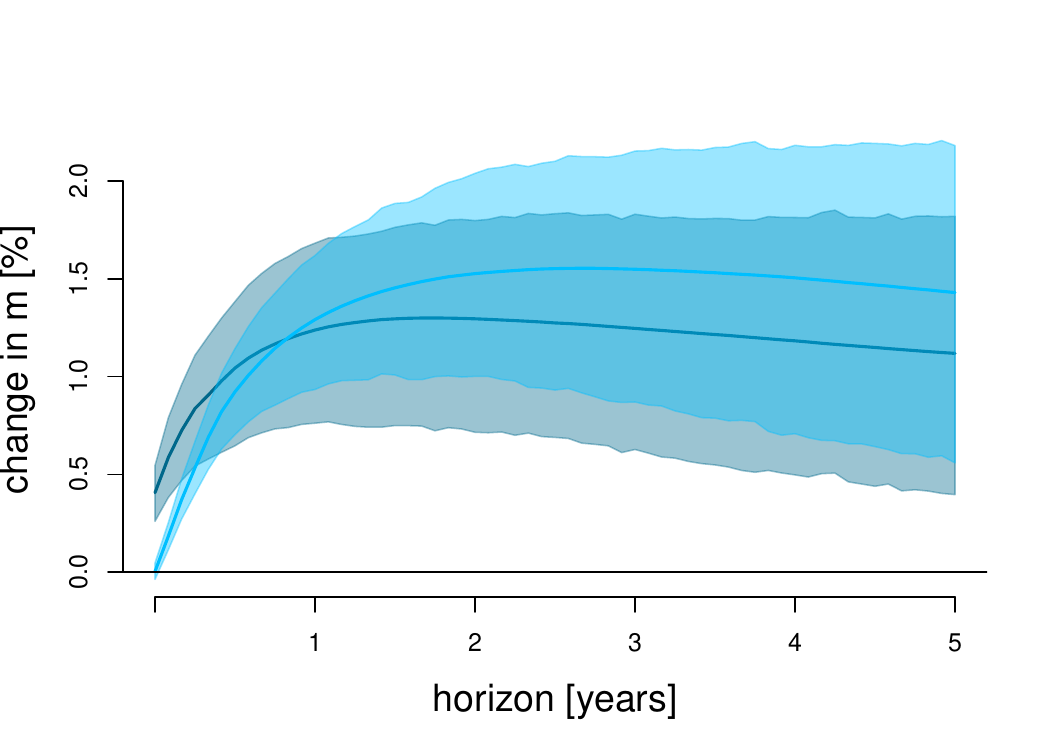}}
	\subfloat[response of $sp$]{	\includegraphics[trim={0.0cm 0.5cm 0.0cm 2.0cm},width=0.33\textwidth]{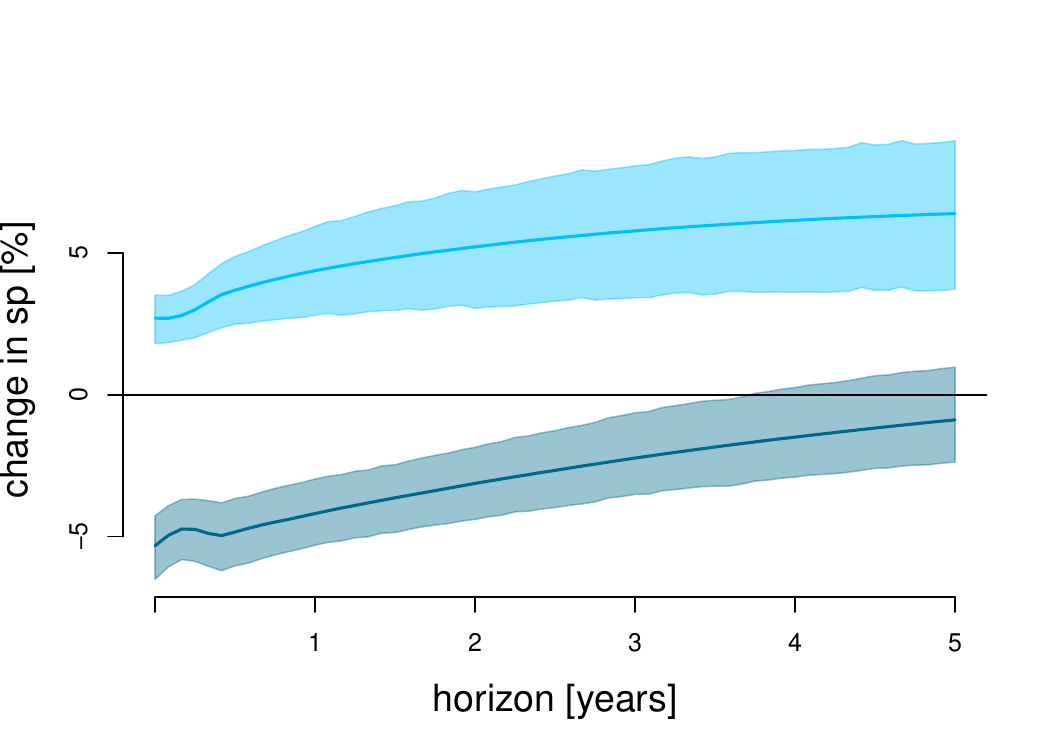}}
		\caption{Impulse response to US monetary policy shocks} \label{fig:IRF}
\end{figure}

In the first regime, an unexpected monetary loosening leads to a persistent level shift in the federal funds rate. The response falls to the lowest point after two years at 80 basis points. This shock is effective in increasing inflation with the highest impact of 0.40 pp after seven months. The effect on output growth is positive in the long run while the response is not different from zero at short horizons. 
It suggests that in the short run the inflationary push overrules potential direct effects of interest rate decreases on output growth in rather calm and economically stable periods. The system features the liquidity effect with the money supply substantially and permanently increasing. In line with \cite{Ang2011}, the lower interest rates decrease the slope of the bond yield curve for nearly a year, with the trough response narrowing the term spread by 102 basis points after one month. They also lead to a sharp increase in stock prices. 

In the second regime, a negative monetary policy shock decreases the federal funds rate permanently but slightly more moderately than in the first regime. In this regime, the monetary policy is strongly expansionary in the long run as output growth increases permanently and after a re-bouncing effect by 0.54 pp within the second year. At the same time, inflation's response is negligible. Therefore, this regime observes an increase in the importance of output relative to inflation in the monetary policy reaction function that together with the timing of the second regime, is in line with the shift in the emphasis in the monetary policy after 2000 described by \cite{ba2016}.

In these times of crises, the liquidity effect is positive. The slope of the bond yield curve is only slightly affected as the term spread widens by 32 basis points after six months, strengthening the conclusion by \cite{Tillmann2020} on time variation in this effect. The response cools off gradually but remains significant at all horizons. The stock prices contract by more than 5\% on impact. This result closely follows the change over time of the impulse responses reported by \cite{gali2015} that the authors explained by the existence of rational bubbles in the stock market.

These effects are in line with the characterisation of the monetary policy in the US as unconventional in the periods aligning with the second regime. The main objectives of these policies were to stimulate the economy in the aftermaths of the crises and provide liquidity, all of which was implemented without substantially increasing inflation in line with ``missing inflation`` during the 2010s \citep[see][]{bobeica2019}, a period which is covered by the second regime. 

Our results indicate time-variation in the monetary policy shock identification. Time-varying parameters and identification play a crucial role in the interpretability of the shocks. It is not obtained if any of these features are omitted from the model formulation. 

\section{Conclusions} \label{sec:conclusions}

\noindent We propose a new Bayesian structural vector autoregression which allows for time-varying identification. Our model facilitates a data-driven search of regime-specific exclusion restriction patterns. It features Markov-switching regimes in the structural matrix and a standard deviation parameter as well as structural shock variances following a stochastic volatility process. These components allow us to study time-variation in structural shock identification, time-varying impulse responses, and provide an easy way to check identification through heteroskedasticity within each regime as a validity condition of regime-specific identification. 

In an empirical illustration, we show that data strongly support time-variation in US monetary policy shock identification. We find evidence that during crisis and periods of unconventional monetary policy actions monetary policy shock identification is best captured by exclusion restrictions which allow money and interest rate to co-move. Before 2000 and during normal times, no restriction on the term spread is the data supported identification scheme for monetary policy shocks.

\bibliographystyle{chicago}
\bibliography{bib}

\includepdf[pages=1-11]{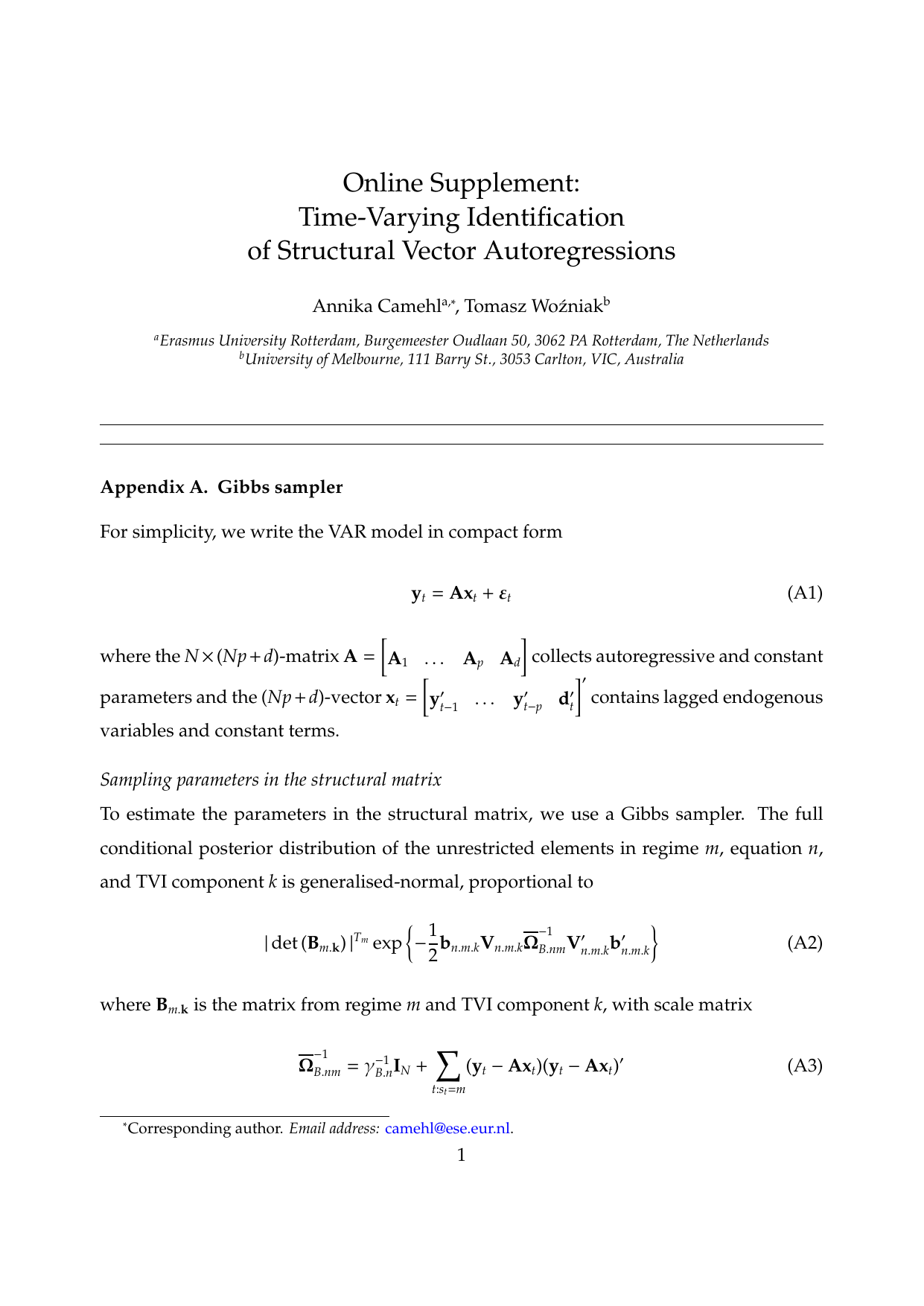}

\end{document}